\documentclass[11pt,onecolumn]{article}
\usepackage{graphics,graphicx,epsfig,amsmath,color}
\usepackage[T1]{fontenc}
\usepackage[left=2cm,right=2cm,top=2cm,bottom=2cm,bindingoffset=0cm]{geometry}
\usepackage{subfig}
\usepackage{amsmath, amsthm, amssymb}
\usepackage{graphicx}
\usepackage{authblk}

\usepackage[utf8]{inputenc}
\usepackage[english]{babel}
\usepackage{cite}
\usepackage{enumitem}
\usepackage{float}
\usepackage{color}
\usepackage{xcolor}
\usepackage{booktabs}
\usepackage{colortbl}
\usepackage{comment}
\usepackage[toc,page]{appendix}
\usepackage{tikz}
\usetikzlibrary{arrows.meta, bending, positioning, decorations.markings}

\tikzset{
    mid arrow/.style={
        postaction={
            decorate,
            decoration={
                markings,
                mark=at position 0.5 with {\arrow[#1, scale=2]{stealth}} 
            }
        }
    }
}

\newtheorem{proposition}{Proposition}
\newtheorem{theorem}{Theorem}
\newtheorem{corollary}{Corollary}

\begin{document}

\title{Heteroclinic networks in an ensemble of generalized Lotka-Volterra elements}
\author[1]{Alexander Korotkov}
\author[1]{Ekaterina Syundyukova}
\author[1]{Elena Gubina}
\author[1,2]{Grigory Osipov}
\affil[1]{Department of Control Theory and Dynamics of Systems, Lobachevsky State University of Nizhny Novgorod, 23, Gagarin Avenue, Nizhny Novgorod, 603022, Russia}
\affil[2]{Research and Education Mathematical Center "Mathematics for Future Technologies", Lobachevsky State University of Nizhny Novgorod, 23, Gagarin Avenue, Nizhny Novgorod, 603022, Russia}
\date{}
\maketitle

\begin{abstract}
In this article the generalized Lotka–Volterra model of ensemble of four excitory or inhibitory coupled elements are studied. It is shown that in the phase space of the model there exist heteroclinic network: a connected union of two or more heteroclinic cycles. A partition of the plane of coupling's parameters into sets of existence of various heteroclinic networks is constructed.
\end{abstract}

\section{Introduction}
Computational neuroscience has its roots in the fundamental research of Hodgkin and Huxley \cite{hodgkin1952currents, hodgkin1952components, hodgkin1952quantitative, hodgkin1952measurement}, which focused on the mechanisms of action potential propagation in the squid giant axon. Based on these works, a mathematical model of the neuron was developed, called the Hodgkin–Huxley model. This subsequently served as the basis for the creation of a number of simplified approaches to modeling neural activity. Neuron models can be divided into two types: microscopic-level (cell level) models and macroscopic-level (cell ensemble level) models. Models of the first type describe each neuron of the ensemble separately, its dynamics are defined by a system of differential or difference equations, and the connections between neurons are also formalized using the corresponding equations. The second type includes models that describe interactions between subpopulations of neurons, such as the Wilson–Cowan models \cite{wilson1972excitatory, wilson1973mathematical, destexhe2009wilson} and generalized Lotka–Volterra models \cite{fukai1997simple}. These models describe the interaction between subpopulations of neurons.


The generalized Lotka–Volterra models are widely used in natural and social sciences. For example, it can be used to describe models of population biology \cite{may1975nonlinear}, hydrodynamics \cite{busse1980convection, holmes2012turbulence}, economics \cite{orlando2021nonlinearities}, physics \cite{zhang2023emergent, roy2019numerical}.

The generalized Lotka–Volterra models can show WLC dynamics \cite{rabinovich2001dynamical, afraimovich2004heteroclinic, rabinovich2006dynamical, korotkov2025heteroclinic}. The concept of Winnerless Competition (WLC) is a general principle for information processing by dynamical systems. Winnerless competition is a concept that, when applied to neurodynamics, means that certain neurons or subpopulations of neurons temporarily become dominant while others are suppressed, so that the winner of the competition changes over time. The mathematical image of WLC dynamics is a stable heteroclinic cycle. A heteroclinic cycle can consist of heteroclinic trajectories and saddle equilibria or a saddle limit cycles. These saddle states corresponds to the activity of some group of neurons.

In addition to heteroclinic cycles, more complex heteroclinic structures can arise in the phase space of generalized Lotka–Volterra models. These are the so-called heteroclinic networks \cite{kirk1994competition, ashwin1999heteroclinic, schittler2012computation, koch2024biological}. Heteroclinic network is a connected union of two or more heteroclinic cycles. Several studies have shown that heteroclinical networks may underlie brain functions \cite{mazor2005transient, meyer2023heteroclinic, ashwin2024network, thakur2022heteroclinic, rossi2025dynamical}.

In this article the generalized Lotka–Volterra model of ensemble of four elements was studied. It is shown that in the phase space of four-dimensional model there exist heteroclinic network.

\section{Single element}
The dynamics of the activity of one isolated element is given by the equation
\begin{equation} \label{LV_equation}
\dot \rho = \rho(\gamma - \rho)(\rho - 1).
\end{equation}
Since the variable $\rho$ specifies the level of activity of the element, the constraints $0 \leq \rho \leq 1$ are imposed on it. From the equation \eqref{LV_equation} it is clear that the set $0 \leq \rho \leq 1$ is invariant. In this article we will assume that the parameter $\gamma$ obeys the constraints $0 < \gamma < 1$.

The system \eqref{LV_equation} has three equilibria: $E_1 = 0$, $E_2 = \gamma$ and $E_3 = 1$. Equilibria $E_1$ and
$E_3$ are stable, equilibrium $E_2$ is unstable. Thus, the parameter $\gamma$ can be interpreted as the excitation threshold of the element.

\section{The model of the ensemble of four elements}
The model of the ensemble of four elements has the form
\begin{equation} \label{LV_system_4}
\begin{cases}
\dot \rho_1 = \rho_1(\gamma - \rho_1 - \alpha \rho_2 - \beta \rho_3 - \alpha \rho_4)(\rho_1 - 1)\\
\dot \rho_2 = \rho_2(\gamma - \beta \rho_1 - \rho_2 - \alpha \rho_3 - \beta \rho_4)(\rho_2 - 1)\\
\dot \rho_3 = \rho_3(\gamma - \alpha \rho_1 - \beta \rho_2 - \rho_3 - \alpha \rho_4)(\rho_3 - 1)\\
\dot \rho_4 = \rho_4(\gamma - \beta \rho_1 - \alpha \rho_2 - \beta \rho_3 - \rho_4)(\rho_4 - 1)
\end{cases}.
\end{equation}
The hyperplanes $\rho_i = 0$ and $\rho_i = 1$ are invariant sets of this system. Therefore, the hypercube $0 \leq \rho_i \leq 1$ is also an invariant set of this system. In what follows, we will restrict our consideration to this set. In the case of positive parameters $\alpha$  and $\beta$ the coupling is inhibitory,  in the case of negative parameters $\alpha$  and $\beta$ the coupling is excitory.

The system \eqref{LV_system_4} is invariant with respect to the transformation of coordinates and parameters
\begin{equation} \label{sym}
T: (\rho_1, \rho_2, \rho_3, \rho_4, \alpha, \beta) \mapsto (\rho_4, \rho_3, \rho_2, \rho_1, \beta, \alpha).
\end{equation}

From the invariance with respect to the transformation $T$ it follows that the bifurcation diagram on the plane of parameters $(\alpha, \beta)$ is symmetric with respect to the diagonal $\beta = \alpha$, therefore we will further consider the case $\alpha \leq \beta$.

The system \eqref{LV_system_4} has $3^4 = 81$ equilibria. Some of them are presented in the table \ref{table1}.
\begin{table}[h]
    \centering
    \begin{tabular}{| c | c | c | c |}
        \hline
        Equilibrium & Coordinates & Equilibrium & Coordinates\\
        \hline
        $O$ & $(0, 0, 0, 0)$ & $O_{2, 3, 4}$ & $(0, 1, 1, 1)$\\
        \hline
        $O_1$ & $(1, 0, 0, 0)$ & $O_{1, 2, 3, 4}$ & $(1, 1, 1, 1)$\\
        \hline
        $O_2$ & $(0, 1, 0, 0)$ & $P_{1, 2}$ & $(1, \gamma - \beta, 0, 0)$\\
        \hline
        $O_3$ & $(0, 0, 1, 0)$ & $P_{1, 3}$ & $(1, 0, \gamma - \alpha, 0)$\\
        \hline
        $O_4$ & $(0, 0, 0, 1)$ & $P_{1, 4}$ & $(1, 0, 0, \gamma - \beta)$\\
        \hline
        $O_{1, 2}$ & $(1, 1, 0, 0)$ & $P_{2, 1}$ & $(\gamma - \alpha, 1, 0, 0)$\\
        \hline
        $O_{1, 3}$ & $(1, 0, 1, 0)$ & $P_{2, 3}$ & $(0, 1, \gamma - \beta, 0)$\\
        \hline
        $O_{1, 4}$ & $(1, 0, 0, 1)$ & $P_{2, 4}$ & $(0, 1, 0, \gamma - \alpha)$\\
        \hline
        $O_{2, 3}$ & $(0, 1, 1, 0)$ & $P_{3, 1}$ & $(\gamma - \beta, 0, 1, 0)$\\
        \hline
        $O_{2, 4}$ & $(0, 1, 0, 1)$ & $P_{3, 2}$ & $(0, \gamma - \alpha, 1, 0)$\\
        \hline
        $O_{3, 4}$ & $(0, 0, 1, 1)$ & $P_{3, 4}$ & $(0, 0, 1, \gamma - \beta)$\\
        \hline
        $O_{1, 2, 3}$ & $(1, 1, 1, 0)$ & $P_{4, 1}$ & $(\gamma - \alpha, 0, 0, 1)$\\
        \hline
        $O_{1, 2, 4}$ & $(1, 1, 0, 1)$ & $P_{4, 2}$ & $(0, \gamma - \beta, 0, 1)$\\
        \hline
        $O_{1, 3, 4}$ & $(1, 0, 1, 1)$ & $P_{4, 3}$ & $(0, 0, \gamma - \alpha, 1)$\\
        \hline
    \end{tabular}
    \caption{Some equilibria of the system \eqref{LV_system_4}.}
    \label{table1}
\end{table}

\subsection{Existence of heteroclinic cycles containing equilibria at the vertices of the invariant hypercube $0 \leq \rho_i \leq 1$}
Since all edges of the hypercube $0 \leq \rho_i \leq 1$ are invariant, then for the existence of a trajectory directed along the edge of the hypercube and connecting the equilibria $O_1$ and $O_{1, 2}$, it is necessary and sufficient that no other equilibria lie on this edge. On the line passing through the points $O_1$ and $O_{1, 2}$ there lies only one equilibrium $P_{1, 2}$. Therefore, the trajectory exists if one of the following inequalities is satisfied:
\begin{equation} \label{cond_1}
\left[
\begin{gathered}
\gamma - \beta > 1\\
\gamma - \beta < 0
\end{gathered}
\right..
\end{equation}
Also in this case the equilibria $P_{1, 4}$, $P_{2, 3}$, $P_{3, 1}$, $P_{3, 4}$, $P_{4, 2}$ do not lie on the segments connecting the equilibria $O_1$ and $O_{1, 4}$, $O_2$ and $O_{2, 3}$, $O_3$ and $O_{1, 3}$, $O_3$ and $O_{3, 4}$, $O_4$ and $O_{2, 4}$, respectively.

Similarly, if one of inequalities
\begin{equation} \label{cond_2}
\left[
\begin{gathered}
\gamma - \alpha > 1\\
\gamma - \alpha < 0
\end{gathered}
\right.
\end{equation}
is satisfied, then the equilibria $P_{1, 3}$, $P_{2, 1}$, $P_{2, 4}$, $P_{3, 2}$, $P_{4, 1}$, $P_{4, 3}$ do not lie on the segments between the equilibria $O_1$ and $O_{1, 3}$, $O_2$ and $O_{1, 2}$, $O_2$ and $O_{2, 4}$, $O_3$ and $O_{2, 3}$, $O_4$ and $O_{1, 4}$, $O_4$ and $O_{3, 4}$, respectively.

For the existence of trajectories between the equilibria $O_{1, 2}$ and $O_{1, 2, 3}$, $O_{1, 2}$ and $O_{1, 2, 4}$, $O_{1, 3}$ and $O_{1, 2, 3}$, $O_{2, 3}$ and $O_{1, 2, 3}$, $O_{2, 3}$ and $O_{2, 3, 4}$, $O_{2, 4}$ and $O_{2, 3, 4}$, $O_{3, 4}$ and $O_{1, 3, 4}$, $O_{3, 4}$ and $O_{2, 3, 4}$, it is necessary and sufficient that one of the following inequalities be satisfied:
\begin{equation} \label{cond_3}
\left[
\begin{gathered}
\gamma - \alpha - \beta > 1\\
\gamma - \alpha - \beta < 0
\end{gathered}
\right..
\end{equation}

Trajectories between the equilibria $O_{1, 3}$ and $O_{1, 3, 4}$, $O_{1, 4}$ and $O_{1, 2, 4}$ exist if one of the following inequalities is satisfied:
\begin{equation} \label{cond_4}
\left[
\begin{gathered}
\gamma - 2\beta > 1\\
\gamma - 2\beta < 0
\end{gathered}
\right..
\end{equation}

Trajectories between the equilibria $O_{1, 4}$ and $O_{1, 3, 4}$, $O_{2, 4}$ and $O_{1, 2, 4}$ exist if one of the following inequalities is satisfied:
\begin{equation} \label{cond_5}
\left[
\begin{gathered}
\gamma - 2\alpha > 1\\
\gamma - 2\alpha < 0
\end{gathered}
\right..
\end{equation}

Equilibria $O_{1, 2, 3}$ and $O_{1, 2, 3, 4}$, as well as $O_{1, 3, 4}$ and $O_{1, 2, 3, 4}$ are connected by trajectories if one of the following inequalities is satisfied:
\begin{equation} \label{cond_6}
\left[
\begin{gathered}
\gamma - \alpha - 2\beta > 1\\
\gamma - \alpha - 2\beta < 0
\end{gathered}
\right..
\end{equation}

Equilibria $O_{1, 2, 4}$ and $O_{1, 2, 3, 4}$, as well as $O_{2, 3, 4}$ and $O_{1, 2, 3, 4}$ are connected by trajectories if one of the following inequalities is satisfied:
\begin{equation} \label{cond_7}
\left[
\begin{gathered}
\gamma - 2\alpha - \beta > 1\\
\gamma - 2\alpha - \beta < 0
\end{gathered}
\right..
\end{equation}

Figure \ref{graph4} shows a graph of possible transitions between equilibria at the vertices of the hypercube $0 \leq \rho_i \leq 1$. Near each edge of the graph there is a label corresponding to one of the conditions \eqref{cond_1}-\eqref{cond_7} (label $E1$ corresponds to condition \eqref{cond_1}, label $E2$ corresponds to condition \eqref{cond_2}, etc). For each edge of the graph to exist, the condition specified next to that edge must be satisfied. If the first inequality from the corresponding set of inequalities is satisfied, then movement along the edge occurs upwards, otherwise, downwards.
\begin{figure}[H]
    \centering
    \includegraphics[width = 1\linewidth] {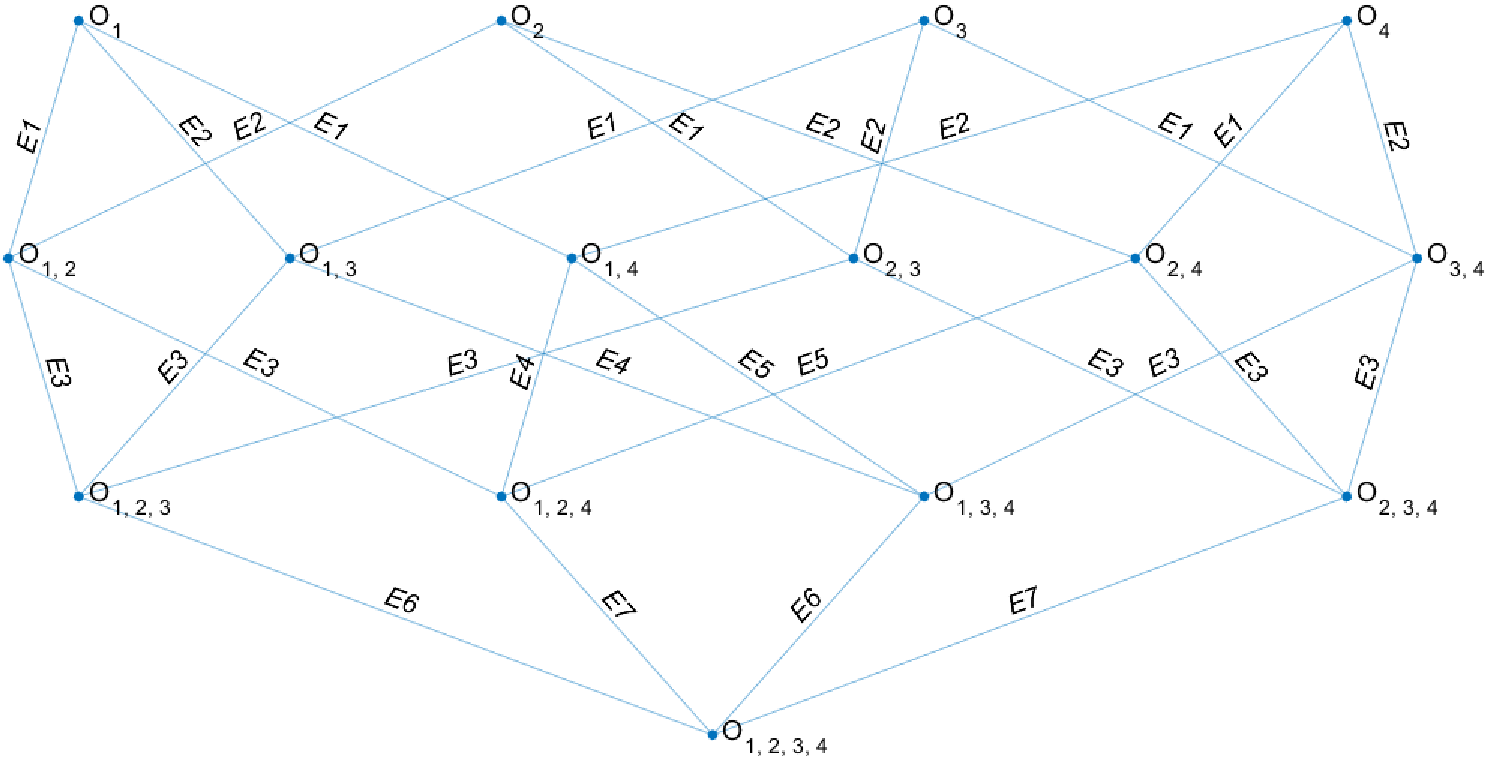}
    \caption{Graph of possible transitions between equilibria at the vertices of the invariant hypercube $0 \leq \rho_i \leq 1$.}
    \label{graph4}
\end{figure}

Since the graph \ref{graph4} does not contain edges directed horizontally, all possible cycles in it have even length. Proposition \ref{prop1} states that all heteroclinic cycles have at least six equilibria (see appendix).

Note that from the condition $0 < \gamma < 1$ it follows that the inequality $\gamma - \alpha > 1$ implies the inequality $\gamma - 2\alpha > 1$, and the inequality $\gamma - \alpha < 0$ implies the inequality $\gamma - 2\alpha < 0$. It follows that the existence of an edge with label $E2$ implies the existence of an edge with label $E5$, and if both of these edges exist, then they have the same direction. Similarly, the existence of an edge with label $E1$ implies the existence of an edge with label $E4$, and if both of these edges exist, then they have the same direction.

In the phase space of the system \eqref{LV_system_4} there exist $21$ heteroclinic cycles $\Gamma_i$, containing only those equilibria that lie at the vertices of the hypercube $0 \leq \rho_i \leq 1$ (see appendix).

Among the heteroclinic cycles given, four cycles contain six equilibria (cycles $\Gamma_1$, $\Gamma_{16}$, $\Gamma_{20}$, $\Gamma_{21}$), fifteen cycles contain eight equilibria (cycles $\Gamma_3$, $\Gamma_4$, $\Gamma_5$, $\Gamma_6$, $\Gamma_7$, $\Gamma_8$, $\Gamma_9$, $\Gamma_{10}$, $\Gamma_{11}$, $\Gamma_{12}$, $\Gamma_{13}$, $\Gamma_{14}$, $\Gamma_{17}$, $\Gamma_{18}$, $\Gamma_{19}$) and two cycles contain ten equilibria (cycles $\Gamma_2$ and $\Gamma_{15}$).

The transformation \eqref{sym} assigns to each of the found cycles a cycle obtained by renumbering the equilibria included in this cycle according to the transformation \eqref{sym}. The set of existence of the cycle in parameter plane $(\alpha, \beta)$ is also transformed according to this law. Therefore, the existence sets of the corresponding cycles are symmetric with respect to the line $\alpha = \beta$.

In addition to the \eqref{sym} transformation, one more transformation can be applied to the found heteroclinic cycles: to do this, it is necessary to change the direction of movement along the cycle to the opposite. In this case, the existence set of the resulting cycle is defined as follows: each inequality defining the set of existence of the original cycle must be replaced by one that is included together with it in one of the sets \eqref{cond_1}-\eqref{cond_7}.

For example, the cycle $\Gamma_{21}: O_3 \rightarrow O_{1, 3} \rightarrow O_{1, 3, 4} \rightarrow O_{1, 2, 3, 4} \rightarrow O_{2, 3, 4} \rightarrow O_{2, 3} \rightarrow O_3$ is mapped by \eqref{sym} to the cycle $\Gamma'_{21}: O_2 \rightarrow O_{2, 4} \rightarrow O_{1, 2, 4} \rightarrow O_{1, 2, 3, 4} \rightarrow O_{1, 2, 3} \rightarrow O_{2, 3} \rightarrow O_2$. The existence set of the cycle $\Gamma'_{21}$ is determined by the following system of inequalities:
\begin{equation} \label{example}
\begin{cases}
\gamma - \alpha - \beta > 1\\
\gamma - \alpha - 2\beta > 1\\
\gamma - \beta > 1\\
\gamma - 2\alpha - \beta < 0\\
\gamma - \alpha < 0\\
\gamma - 2\alpha < 0
\end{cases}.
\end{equation}
This condition is obtained from the condition defining the existence set of the cycle $\Gamma_{21}$ by replacing $\alpha \leftrightarrow \beta$.

The second transformation $R$ maps the cycle $\Gamma'_{21}$ to the cycle $\Gamma_{20}: O_2 \rightarrow O_{2, 3} \rightarrow O_{1, 2, 3} \rightarrow O_{1, 2, 3, 4} \rightarrow O_{1, 2, 4} \rightarrow O_{2, 4} \rightarrow O_2$. Its existence set is obtained from the system \eqref{example} by replacing each inequality of the form $\ldots < 1$ with an inequality of the form $\ldots > 0$ with the same left-hand side and vice versa. As a result, we obtain a system that defines the existence set of the cycle $\Gamma_{20}$.

Proposition \ref{prop2} (see appendix) states that all heteroclinic cycles contain at least one of $O_i$ equilibria. Corollary \ref{prop3} (see appendix) states that necessary condition for the existence of a heteroclinic cycle with equilibria at the vertices of the hypercube $0 \leq \rho_i \leq 1$ is the fulfillment of the inequality $\alpha + 1 < \gamma < \beta$ or $\beta + 1 < \gamma < \alpha$.
As shown in appendix, for the existence of cycles from the set $\{\Gamma_1, \Gamma_5, \Gamma_6, \Gamma_{10}, \Gamma_{11}, \Gamma_{16}\}$ it is sufficient that the inequalities $\alpha + 1 < \gamma < \beta$ are satisfied. Thus, we obtained the theorem.
\begin{theorem}
For the existence of a heteroclinic cycle with equilibria at the vertices of the hypercube $0 \leq \rho_i \leq 1$, it is necessary and sufficient that the inequalities $\alpha + 1 < \gamma < \beta$ or $\beta + 1 < \gamma < \alpha$ are satisfied.
\end{theorem}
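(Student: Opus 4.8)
The statement is a necessary-and-sufficient condition, so the plan is to prove the two implications separately and to use the symmetry \eqref{sym} to halve the sufficiency argument. For \emph{necessity} nothing new is required: Corollary \ref{prop3} already asserts that the existence of any heteroclinic cycle assembled from vertex equilibria forces $\alpha + 1 < \gamma < \beta$ or $\beta + 1 < \gamma < \alpha$. The genuine content of the theorem is therefore the reverse implication, and the first step of the sufficiency proof is a reduction. Since the transformation $T$ of \eqref{sym} maps trajectories to trajectories while interchanging $\alpha$ and $\beta$, it carries any heteroclinic cycle living in the region $\alpha + 1 < \gamma < \beta$ to one living in the region $\beta + 1 < \gamma < \alpha$; hence it suffices to exhibit a cycle whenever $\alpha + 1 < \gamma < \beta$.

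Working under that hypothesis together with $0 < \gamma < 1$, I would first record that it pins the signs $\alpha < 0 < \gamma < \beta$. This immediately fixes four of the seven edge families: from $\gamma - \beta < 0$ and $\gamma - \alpha > 1$ the edges labelled $E1$ (downward) and $E2$ (upward) exist, and by the implications noted just before the theorem these force $E4$ (downward) and $E5$ (upward) as well. The only remaining freedom lies in $E3$, $E6$, $E7$, governed by the three quantities $\gamma - \alpha - \beta$, $\gamma - \alpha - 2\beta$, $\gamma - 2\alpha - \beta$. Because $\alpha < 0 < \beta$, these are linearly ordered, $\gamma - 2\alpha - \beta > \gamma - \alpha - \beta > \gamma - \alpha - 2\beta$, which sharply limits the combinatorics of which of them exceed $1$ or fall below $0$.

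The heart of the argument is then a covering claim: the union of the existence sets of the six cycles $\{\Gamma_1, \Gamma_5, \Gamma_6, \Gamma_{10}, \Gamma_{11}, \Gamma_{16}\}$ contains the entire region $\alpha + 1 < \gamma < \beta$. I would prove this by partitioning the region according to the positions of the three ordered quantities relative to the thresholds $0$ and $1$, and, for each cell of the partition, naming one cycle from the list all of whose edge conditions \eqref{cond_1}--\eqref{cond_7} hold simultaneously: the fixed directions $E1,E2,E4,E5$ supply the common ``backbone'' of every candidate cycle, while the choice among the six cycles absorbs the different sign patterns of $E3,E6,E7$. I expect the main obstacle to be precisely this exhaustiveness check — verifying that no cell is left uncovered. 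The delicate cases are not the open sign regions but the degenerate bands in which one of $\gamma - \alpha - \beta$, $\gamma - \alpha - 2\beta$, $\gamma - 2\alpha - \beta$ lies inside $[0,1]$: there the corresponding $E3$/$E6$/$E7$ edges disappear, so one must exhibit a cycle in the list that never uses the vanished edge yet still closes up. Confirming that such a surviving cycle always exists, together with invoking $T$ to transfer the conclusion to $\beta + 1 < \gamma < \alpha$, completes the sufficiency and hence the theorem.
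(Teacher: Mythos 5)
Your overall skeleton --- necessity from Corollary \ref{prop3}, reduction to the region $\alpha + 1 < \gamma < \beta$ via the symmetry \eqref{sym}, and sufficiency via the six cycles $\{\Gamma_1, \Gamma_5, \Gamma_6, \Gamma_{10}, \Gamma_{11}, \Gamma_{16}\}$ --- is exactly the paper's, and it is correct. But the step you call the heart of the argument, the covering claim with its partition by the sign patterns of $\gamma - \alpha - \beta$, $\gamma - \alpha - 2\beta$, $\gamma - 2\alpha - \beta$, rests on a misreading of which edges those six cycles actually use. None of them contains an edge labelled $E3$, $E6$ or $E7$: $\Gamma_1$ and $\Gamma_{16}$ are hexagons alternating $E1$/$E2$ edges (e.g.\ $\Gamma_1: O_1 \rightarrow O_{1,2} \rightarrow O_2 \rightarrow O_{2,3} \rightarrow O_3 \rightarrow O_{1,3} \rightarrow O_1$), $\Gamma_5$ likewise uses only $E1$/$E2$ edges, and $\Gamma_6$, $\Gamma_{10}$, $\Gamma_{11}$ use only $E1$, $E2$, $E4$, $E5$. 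Since $0 < \gamma < 1$ makes $\gamma - \alpha > 1$ imply $\gamma - 2\alpha > 1$, and $\gamma - \beta < 0$ imply $\gamma - 2\beta < 0$, every edge condition of every one of the six cycles is implied by $\alpha + 1 < \gamma < \beta$ alone. Hence each of these cycles individually exists on the \emph{entire} region --- the existence set of $\Gamma_1$ by itself is exactly $\{\alpha + 1 < \gamma < \beta\}$ --- and sufficiency is a one-line observation, which is precisely what the appendix verifies cycle by cycle. Consequently there is no exhaustiveness check to perform, and the ``delicate degenerate bands'' you single out as the main obstacle (parameter values where an $E3$/$E6$/$E7$ quantity lies in $[0,1]$ and the corresponding edges vanish) are vacuous for this theorem; those sign patterns only matter later, when the paper determines which \emph{additional} cycles attach to these six to form the different heteroclinic networks $\frak{G_1}, \ldots, \frak{G_6}$. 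Your plan would still terminate, since any cell of your partition can be covered by $\Gamma_1$ alone, so this is an unnecessary complication rather than a fatal gap --- but as written it locates the content of the proof in the wrong place.
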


\subsection{Heteroclinic networks}
A heteroclinic network is a finite set of saddle equilibria and separatrices connecting them, such that for any two equilibria $Q_1$, $Q_2$ of this network, one can move from $Q_1$ to $Q_2$ along heteroclinic trajectories of this network.

Examples of heteroclinic networks:

\begin{center}
\begin{tikzpicture}
\draw[thick, mid arrow] (2, 4) -- (0, 2) node[anchor=east] {$O_1$};
\draw[thick, mid arrow] (0, 2) -- (4, 2) node[anchor=west] {$O_2$};
\draw[thick, mid arrow] (4, 2) -- (2, 4) node[anchor=west] {$O_3$};
\draw[thick, mid arrow] (4, 2) -- (2, 0) node[anchor=west] {$O_4$};
\draw[thick, mid arrow] (2, 0) -- (0, 2);
\end{tikzpicture}
\end{center}

\begin{center}
\resizebox{10.3cm}{!}{
\begin{tikzpicture}
\draw[thick, mid arrow] (2, 4) -- (0, 2) node[anchor=east] (1) {$O_1$};
\draw[thick, mid arrow] (0, 2) -- (4, 2) node[anchor=south west] {$O_2$};
\draw[thick, mid arrow] (4, 2) -- (2, 4) node[anchor=west] (3) {$O_3$};
\draw[thick, mid arrow] (8, 4) -- (6, 2) node[anchor=south east] {$O_4$};
\draw[thick, mid arrow] (6, 2) -- (10, 2) node[anchor=west] (5) {$O_5$};
\draw[thick, mid arrow] (10, 2) -- (8, 4) node[anchor=east] (6) {$O_6$};
\draw[thick, mid arrow] (6, 2) -- (4, 2);
\draw (3) edge [thick, mid arrow, bend left=30] (6);
\end{tikzpicture}}
\end{center}

An example of a set of saddle equilibria and heteroclinic trajectories that do not form a heteroclinic network:

\begin{center}
\resizebox{10.5cm}{!}{
\begin{tikzpicture}
\draw[thick, mid arrow] (2, 4) -- (0, 2) node[anchor=east] {$O_1$};
\draw[thick, mid arrow] (0, 2) -- (4, 2) node[anchor=south west] {$O_2$};
\draw[thick, mid arrow] (4, 2) -- (2, 4) node[anchor=west] {$O_3$};
\draw[thick, mid arrow] (8, 4) -- (6, 2) node[anchor=south east] {$O_4$};
\draw[thick, mid arrow] (6, 2) -- (10, 2) node[anchor=west] {$O_5$};
\draw[thick, mid arrow] (10, 2) -- (8, 4) node[anchor=west] {$O_6$};
\draw[thick, mid arrow] (6, 2) -- (4, 2);
\end{tikzpicture}}
\end{center}

From the inequalities \eqref{cond_hc_1}-\eqref{cond_hc_6} it follows that in the phase space of the system \eqref{LV_system_4}, depending on the values of the parameters, the following heteroclinic networks can exist:
\begin{enumerate}
    \item In the set $0 < \gamma - \alpha - \beta < 1$, $\gamma - \alpha - 2\beta < 0$, $\gamma - \beta < 0$, $\gamma - 2\alpha - \beta > 1$, the conditions \eqref{cond_hc_1} and \eqref{cond_hc_6} are satisfied, therefore, there exists a heteroclinic network $\frak{G_1}$ consisting of cycles $$\{\Gamma_1, \Gamma_5, \Gamma_6, \Gamma_{10}, \Gamma_{11}, \Gamma_{13}, \Gamma_{16}\}.$$

    \item In the set $\gamma - \alpha - \beta < 0$, $\gamma - 2\alpha - \beta > 1$ the conditions \eqref{cond_hc_1}, \eqref{cond_hc_2}, \eqref{cond_hc_3} and \eqref{cond_hc_6} are satisfied, therefore, there exists a heteroclinic network $\frak{G_2}$ consisting of cycles $$\{\Gamma_1, \Gamma_2, \Gamma_3, \Gamma_5, \Gamma_6, \Gamma_9, \Gamma_{10}, \Gamma_{11}, \Gamma_{13}, \Gamma_{16}, \Gamma_{17}, \Gamma_{18}, \Gamma_{19}, \Gamma_{20}\}.$$

    \item In the set $\gamma - \alpha - \beta > 1$, $\gamma - \alpha - 2\beta < 0$ the conditions \eqref{cond_hc_1}, \eqref{cond_hc_4}, \eqref{cond_hc_5} and \eqref{cond_hc_6} are satisfied, therefore, there exists a heteroclinic network $\frak{G_3}$ consisting of cycles $$\{\Gamma_1, \Gamma_4, \Gamma_5, \Gamma_6, \Gamma_7, \Gamma_8, \Gamma_{10}, \Gamma_{11}, \Gamma_{12}, \Gamma_{13}, \Gamma_{14}, \Gamma_{15}, \Gamma_{16}, \Gamma_{21}\}.$$

    \item In the set $\gamma - \alpha - \beta > 1$, $\gamma - \alpha - 2\beta > 0$, $\gamma - \beta < 0$ the conditions \eqref{cond_hc_1} and \eqref{cond_hc_4} are satisfied, therefore, there exists a heteroclinic network $\frak{G_4}$ consisting of cycles $$\{\Gamma_1, \Gamma_4, \Gamma_5, \Gamma_6, \Gamma_7, \Gamma_{10}, \Gamma_{11}, \Gamma_{12}, \Gamma_{16}\}.$$

    \item In the set $\gamma - 2\alpha - \beta < 1$, $\gamma - \alpha - \beta < 0$, $\gamma - \alpha > 1$ the conditions \eqref{cond_hc_1} and \eqref{cond_hc_3} are satisfied, therefore, there exists a heteroclinic network $\frak{G_5}$ consisting of cycles $$\{\Gamma_1, \Gamma_3, \Gamma_5, \Gamma_6, \Gamma_{10}, \Gamma_{11}, \Gamma_{16}, \Gamma_{17}, \Gamma_{18}\}.$$

    \item In the intersection of the set $\gamma - \beta < 0$, $\gamma - \alpha > 1$ with the complement to the union of all sets from the previous points, only the condition \eqref{cond_hc_1} is satisfied, therefore, there is a heteroclinic network $\frak{G_6}$ consisting of cycles $$\{\Gamma_1, \Gamma_5, \Gamma_6, \Gamma_{10}, \Gamma_{11}, \Gamma_{16}\}.$$
\end{enumerate}

The figure \ref{partition} shows the partition of the set $\alpha + 1 < \gamma < \beta$ into the existence sets of the specified heteroclinic networks.
\begin{figure}[H]
    \centering
    \includegraphics[width = .4\linewidth]{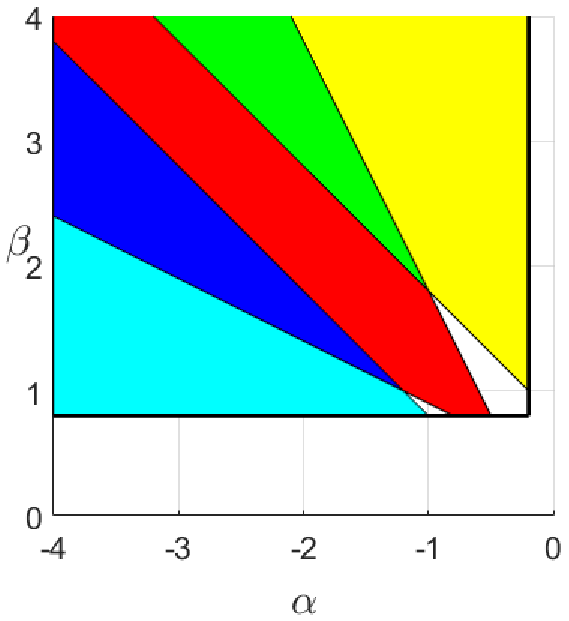}
    \caption{Partition of the set $\alpha + 1 < \gamma < \beta$ (the boundaries of the set are thick black lines) into sets of existence of different heteroclinic networks. The set marked in red is the existence set of the network $\frak{G_1}$, the set marked in green is the existence set of the network $\frak{G_2}$, the set marked in blue is the existence set of the network $\frak{G_3}$, the set marked in cyan is the existence set of the network $\frak{G_4}$, the set marked in yellow is the existence set of the network $\frak{G_5}$, the set marked in white is the existence set of the network $\frak{G_6}$.}
    \label{partition}
\end{figure}

\begin{figure}[H]
    \centering
    \includegraphics[width = 1\linewidth] {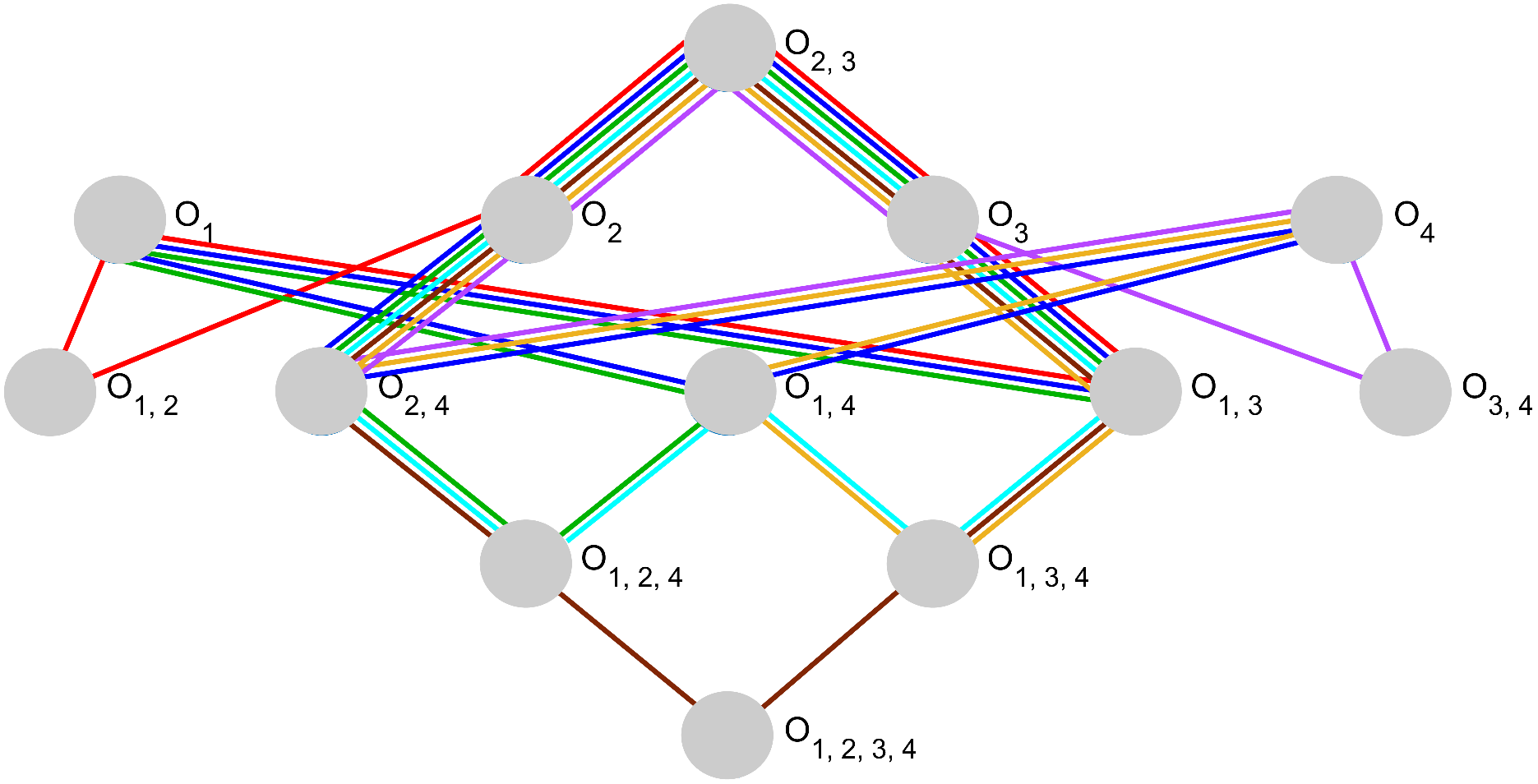}
    \caption{Graph showing a heteroclinic network consisting of cycles $\Gamma_1$ (is shown in red), $\Gamma_5$ (is shown in blue), $\Gamma_6$ (is shown in green), $\Gamma_{10}$ (is shown in light brown), $\Gamma_{11}$ (is shown in cyan), $\Gamma_{13}$ (is shown in brown), $\Gamma_{16}$ (is shown in violet). Each of the network cycles has a symmetric cycle in the same network with respect to the line passing through the vertices $O_{2, 3}$, $O_{1, 4}$ and $O_{1, 2, 3, 4}$. The cycle $\Gamma_1$ is symmetric to the cycle $\Gamma_{16}$, the cycle $\Gamma_6$ is symmetric to the cycle $\Gamma_{10}$, and the remaining cycles are symmetric to themselves.}
    \label{graph_for_network}
\end{figure}
Figure \ref{graph_for_network} shows a graph in which the cycles included in the heteroclinic network $\{\Gamma_1, \Gamma_5, \Gamma_6, \Gamma_{10}, \Gamma_{11}, \Gamma_{13}, \Gamma_{16}\}$ are marked.

To obtain a trajectory tending to a stable heteroclinic cycle, we make the substitutions $x_i = \ln \frac{\rho_i}{1 - \rho_i}$. As a result, the system \eqref{LV_system_4} takes the following form:
\begin{equation}
\begin{cases}
\dot x_1 = \frac{1}{1 + e^{-x_1}} + \frac{\alpha}{1 + e^{-x_2}} + \frac{\beta}{1 + e^{-x_3}} + \frac{\alpha}{1 + e^{-x_4}} - \gamma\\
\dot x_2 = \frac{\beta}{1 + e^{-x_1}} + \frac{1}{1 + e^{-x_2}} + \frac{\alpha}{1 + e^{-x_3}} + \frac{\beta}{1 + e^{-x_4}} - \gamma\\
\dot x_3 = \frac{\alpha}{1 + e^{-x_1}} + \frac{\beta}{1 + e^{-x_2}} + \frac{1}{1 + e^{-x_3}} + \frac{\alpha}{1 + e^{-x_4}} - \gamma\\
\dot x_4 = \frac{\beta}{1 + e^{-x_1}} + \frac{\alpha}{1 + e^{-x_2}} + \frac{\beta}{1 + e^{-x_3}} + \frac{1}{1 + e^{-x_4}} - \gamma
\end{cases}.
\end{equation}
The corresponding time series are shown in figure \ref{time_series}.

\begin{figure}[H]
    \centering
    \subfloat[]
    {
        \includegraphics[width = .45\linewidth] {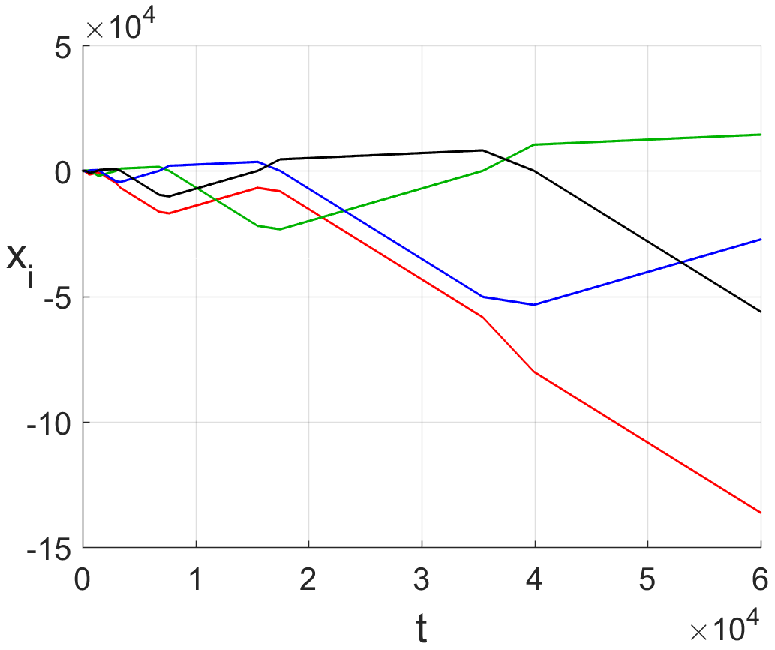}
    }    
    \subfloat[]
    {
        \includegraphics[width = .45\linewidth] {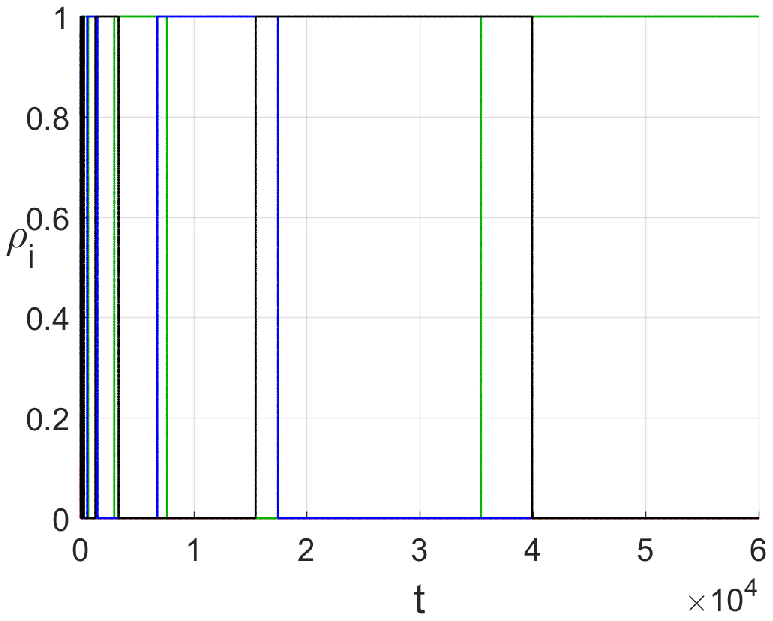}
    }
    \caption{Time series (in the figure (a) new coordinates $x_i$ are used, in the figure (b) original coordinates $\rho_i$ are used). Initial conditions: (0.65, 0.81, 0.67, 0.97). Parameters: $\alpha = -2$, $\beta = 2.1$, $\gamma = 0.8$.}
    \label{time_series}
\end{figure}

\section{Conclusions}
\begin{itemize}
    \item The article proposes and studies a model of an ensemble of four generalized Lotka-Volterra elements, in which the couplings between the elements can be excitatory, inhibitory or mixed (both excitatory and inhibitory) types depending on the values of the parameters.
    \item Necessary and sufficient conditions for the existence of heteroclinic cycles lying on the edges of the invariant hypercube are found.
    \item It is shown existence of heteroclinic networks. A partition of the parameter space into sets of existence of various heteroclinic networks is given.
    \item Due to the presence in the phase space of the proposed model of a stable heteroclinic cycle, depending on the initial conditions, the following is possible: a) absence of neuron-like activity (mathematical image is stable equilibrium in origin); b) non-decaying sequential switching activity of single neuron, then couple of neurons, then another single neuron, then couple of other neurons, etc. (mathematical image is stable heteroclinic cycle).
\end{itemize}

\section{Acknowledgments}
The authors thank S. V. Gonchenko for helpful advices.
The study is supported by the Ministry of Science and Education of the Russian Federation, Agreement \#FSWR-2020-0036.

\section*{Data Availability Statement}
The data that support the findings of this study are available within the article [and its supplementary material].

\appendix
\section{Heteroclinic cycles}
In the phase space of the system \eqref{LV_system_4} there exist the following heteroclinic cycles, containing only those equilibria that lie at the vertices of the hypercube $0 \leq \rho_i \leq 1$:
\begin{enumerate}
    \item The cycle $\Gamma_1: O_1 \rightarrow O_{1, 2} \rightarrow O_2 \rightarrow O_{2, 3} \rightarrow O_3 \rightarrow O_{1, 3} \rightarrow O_1$ exists when the following condition is satisfied: $\alpha + 1 < \gamma < \beta$.
    
    \item The cycle $\Gamma_2: O_1 \rightarrow O_{1, 2} \rightarrow O_{1, 2, 3} \rightarrow O_{1, 2, 3, 4} \rightarrow O_{1, 2, 4} \rightarrow O_{2, 4} \rightarrow O_2 \rightarrow O_{2, 3} \rightarrow O_3 \rightarrow O_{1, 3} \rightarrow O_1$ exists when the following condition is satisfied:
    \begin{equation*}
    \begin{cases}
    \gamma - 2\alpha - \beta > 1\\
    \gamma - \alpha > 1\\
    \gamma - 2\alpha > 1\\
    \gamma - \alpha - \beta < 0\\
    \gamma - \alpha - 2\beta < 0\\
    \gamma - \beta < 0
    \end{cases}
    \Longleftrightarrow
    \begin{cases}
    \gamma - 2\alpha - \beta > 1\\
    \gamma - \alpha - \beta < 0
    \end{cases}.
    \end{equation*}
    
    \item The cycle $\Gamma_3: O_1 \rightarrow O_{1, 2} \rightarrow O_{1, 2, 4} \rightarrow O_{2, 4} \rightarrow O_2 \rightarrow O_{2, 3} \rightarrow O_3 \rightarrow O_{1, 3} \rightarrow O_1$ exists when the following condition is satisfied:
    \begin{equation*}
    \begin{cases}
    \gamma - \alpha - \beta < 0\\
    \gamma - \alpha > 1\\
    \gamma - 2\alpha > 1\\
    \gamma - \beta < 0
    \end{cases}
    \Longleftrightarrow
    \begin{cases}
    \gamma - \alpha - \beta < 0\\
    \gamma - \alpha > 1
    \end{cases}.
    \end{equation*}
    
    \item The cycle $\Gamma_4: O_1 \rightarrow O_{1, 4} \rightarrow O_{1, 2, 4} \rightarrow O_{1, 2} \rightarrow O_2 \rightarrow O_{2, 3} \rightarrow O_3 \rightarrow O_{1, 3} \rightarrow O_1$ exists when the following condition is satisfied:
    \begin{equation*}
    \begin{cases}
    \gamma - \alpha - \beta > 1\\
    \gamma - \alpha > 1\\
    \gamma - \beta < 0\\
    \gamma - 2\beta < 0
    \end{cases}
    \Longleftrightarrow
    \begin{cases}
    \gamma - \alpha - \beta > 1\\
    \gamma - \beta < 0
    \end{cases}.
    \end{equation*}
    
    \item The cycle $\Gamma_5: O_1 \rightarrow O_{1, 4} \rightarrow O_4 \rightarrow O_{2, 4} \rightarrow O_2 \rightarrow O_{2, 3} \rightarrow O_3 \rightarrow O_{1, 3} \rightarrow O_1$ exists when the following condition is satisfied: $\alpha + 1 < \gamma < \beta$.
    
    \item The cycle $\Gamma_6: O_1 \rightarrow O_{1, 4} \rightarrow O_{1, 2, 4} \rightarrow O_{2, 4} \rightarrow O_2 \rightarrow O_{2, 3} \rightarrow O_3 \rightarrow O_{1, 3} \rightarrow O_1$ exists when the following condition is satisfied:
    \begin{equation*}
    \begin{cases}
    \gamma - \alpha > 1\\
    \gamma - 2\alpha > 1\\
    \gamma - \beta < 0\\
    \gamma - 2\beta < 0
    \end{cases}
    \Longleftrightarrow
    \alpha + 1 < \gamma < \beta.
    \end{equation*}
    
    \item The cycle $\Gamma_7: O_2 \rightarrow O_{2, 3} \rightarrow O_3 \rightarrow O_{1, 3} \rightarrow O_{1, 3, 4} \rightarrow O_{1, 4} \rightarrow O_{1, 2, 4} \rightarrow O_{1, 2} \rightarrow O_2$ exists when the following condition is satisfied:
    \begin{equation*}
    \begin{cases}
    \gamma - \alpha - \beta > 1\\
    \gamma - \alpha > 1\\
    \gamma - 2\alpha > 1\\
    \gamma - \beta < 0\\
    \gamma - 2\beta < 0
    \end{cases}
    \Longleftrightarrow
    \begin{cases}
    \gamma - \alpha - \beta > 1\\
    \gamma - \beta < 0
    \end{cases}.
    \end{equation*}
    
    \item The cycle $\Gamma_8: O_2 \rightarrow O_{2, 3} \rightarrow O_3 \rightarrow O_{1, 3} \rightarrow O_{1, 3, 4} \rightarrow O_{1, 2, 3, 4} \rightarrow O_{1, 2, 4} \rightarrow O_{1, 2} \rightarrow O_2$ exists when the following condition is satisfied:
    \begin{equation*}
    \begin{cases}
    \gamma - \alpha - \beta > 1\\
    \gamma - 2\alpha - \beta > 1\\
    \gamma - \alpha > 1\\
    \gamma - \alpha - 2\beta < 0\\
    \gamma - \beta < 0\\
    \gamma - 2\beta < 0
    \end{cases}
    \Longleftrightarrow
    \begin{cases}
    \gamma - \alpha - \beta > 1\\
    \gamma - \alpha - 2\beta < 0
    \end{cases}.
    \end{equation*}
    
    \item The cycle $\Gamma_9: O_2 \rightarrow O_{2, 3} \rightarrow O_3 \rightarrow O_{1, 3} \rightarrow O_{1, 2, 3} \rightarrow O_{1, 2, 3, 4} \rightarrow O_{1, 2, 4} \rightarrow O_{2, 4} \rightarrow O_2$ exists when the following condition is satisfied:
    \begin{equation*}
    \begin{cases}
    \gamma - 2\alpha - \beta > 1\\
    \gamma - 2\alpha > 1\\
    \gamma - \alpha > 1\\
    \gamma - \alpha - 2\beta < 0\\
    \gamma - \alpha - \beta < 0\\
    \gamma - \beta < 0
    \end{cases}
    \Longleftrightarrow
    \begin{cases}
    \gamma - 2\alpha - \beta > 1\\
    \gamma - \alpha - \beta < 0
    \end{cases}.
    \end{equation*}
    
    \item The cycle $\Gamma_{10}: O_2 \rightarrow O_{2, 3} \rightarrow O_3 \rightarrow O_{1, 3} \rightarrow O_{1, 3, 4} \rightarrow O_{1, 4} \rightarrow O_4 \rightarrow O_{2, 4} \rightarrow O_2$ exists when the following condition is satisfied:
    \begin{equation*}
    \begin{cases}
    \gamma - \alpha > 1\\
    \gamma - 2\alpha > 1\\
    \gamma - \beta < 0\\
    \gamma - 2\beta < 0
    \end{cases}
    \Longleftrightarrow
    \alpha + 1 < \gamma < \beta.
    \end{equation*}
    
    \item The cycle $\Gamma_{11}: O_2 \rightarrow O_{2, 3} \rightarrow O_3 \rightarrow O_{1, 3} \rightarrow O_{1, 3, 4} \rightarrow O_{1, 4} \rightarrow O_{1, 2, 4} \rightarrow O_{2, 4} \rightarrow O_2$ exists when the following condition is satisfied:
    \begin{equation*}
    \begin{cases}
    \gamma - \alpha > 1\\
    \gamma - 2\alpha > 1\\
    \gamma - \beta < 0\\
    \gamma - 2\beta < 0
    \end{cases}
    \Longleftrightarrow
    \alpha + 1 < \gamma < \beta.
    \end{equation*}
    
    \item The cycle $\Gamma_{12}: O_2 \rightarrow O_{2, 3} \rightarrow O_3 \rightarrow O_{1, 3} \rightarrow O_{1, 3, 4} \rightarrow O_{3, 4} \rightarrow O_4 \rightarrow O_{2, 4} \rightarrow O_2$ exists when the following condition is satisfied:
    \begin{equation*}
    \begin{cases}
    \gamma - \alpha - \beta > 1\\
    \gamma - \alpha > 1\\
    \gamma - \beta < 0\\
    \gamma - 2\beta < 0
    \end{cases}
    \Longleftrightarrow
    \begin{cases}
    \gamma - \alpha - \beta > 1\\
    \gamma - \beta < 0
    \end{cases}.
    \end{equation*}
    
    \item The cycle $\Gamma_{13}: O_2 \rightarrow O_{2, 3} \rightarrow O_3 \rightarrow O_{1, 3} \rightarrow O_{1, 3, 4} \rightarrow O_{1, 2, 3, 4} \rightarrow O_{1, 2, 4} \rightarrow O_{2, 4} \rightarrow O_2$ exists when the following condition is satisfied:
    \begin{equation*}
    \begin{cases}
    \gamma - 2\alpha - \beta > 1\\
    \gamma - \alpha > 1\\
    \gamma - 2\alpha > 1\\
    \gamma - \alpha - 2\beta < 0\\
    \gamma - \beta < 0\\
    \gamma - 2\beta < 0
    \end{cases}
    \Longleftrightarrow
    \begin{cases}
    \gamma - 2\alpha - \beta > 1\\
    \gamma - \alpha - 2\beta < 0\\
    \gamma - \beta < 0
    \end{cases}.
    \end{equation*}
    
    \item The cycle $\Gamma_{14}: O_2 \rightarrow O_{2, 3} \rightarrow O_3 \rightarrow O_{1, 3} \rightarrow O_{1, 3, 4} \rightarrow O_{1, 2, 3, 4} \rightarrow O_{2, 3, 4} \rightarrow O_{2, 4} \rightarrow O_2$ exists when the following condition is satisfied:
    \begin{equation*}
    \begin{cases}
    \gamma - \alpha - \beta > 1\\
    \gamma - 2\alpha - \beta > 1\\
    \gamma - \alpha > 1\\
    \gamma - \alpha - 2\beta < 0\\
    \gamma - \beta < 0\\
    \gamma - 2\beta < 0
    \end{cases}
    \Longleftrightarrow
    \begin{cases}
    \gamma - \alpha - \beta > 1\\
    \gamma - \alpha - 2\beta < 0
    \end{cases}.
    \end{equation*}
    
    \item The cycle $\Gamma_{15}: O_2 \rightarrow O_{2, 3} \rightarrow O_3 \rightarrow O_{1, 3} \rightarrow O_{1, 3, 4} \rightarrow O_{1, 2, 3, 4} \rightarrow O_{2, 3, 4} \rightarrow O_{3, 4} \rightarrow O_4 \rightarrow O_{2, 4} \rightarrow O_2$ exists when the following condition is satisfied:
    \begin{equation*}
    \begin{cases}
    \gamma - \alpha - \beta > 1\\
    \gamma - 2\alpha - \beta > 1\\
    \gamma - \alpha > 1\\
    \gamma - \alpha - 2\beta < 0\\
    \gamma - \beta < 0\\
    \gamma - 2\beta < 0
    \end{cases}
    \Longleftrightarrow
    \begin{cases}
    \gamma - \alpha - \beta > 1\\
    \gamma - \alpha - 2\beta < 0
    \end{cases}.
    \end{equation*}
    
    \item The cycle $\Gamma_{16}: O_2 \rightarrow O_{2, 3} \rightarrow O_3 \rightarrow O_{3, 4} \rightarrow O_4 \rightarrow O_{2, 4} \rightarrow O_2$ exists when the following condition is satisfied: $\alpha + 1 < \gamma < \beta$.
    
    \item The cycle $\Gamma_{17}: O_2 \rightarrow O_{2, 3} \rightarrow O_3 \rightarrow O_{3, 4} \rightarrow O_{1, 3, 4} \rightarrow O_{1, 4} \rightarrow O_4 \rightarrow O_{2, 4} \rightarrow O_2$ exists when the following condition is satisfied:
    \begin{equation*}
    \begin{cases}
    \gamma - \alpha > 1\\
    \gamma - 2\alpha > 1\\
    \gamma - \alpha - \beta < 0\\
    \gamma - \beta < 0
    \end{cases}
    \Longleftrightarrow
    \begin{cases}
    \gamma - \alpha > 1\\
    \gamma - \alpha - \beta < 0
    \end{cases}.
    \end{equation*}
    
    \item The cycle $\Gamma_{18}: O_2 \rightarrow O_{2, 3} \rightarrow O_3 \rightarrow O_{3, 4} \rightarrow O_{1, 3, 4} \rightarrow O_{1, 4} \rightarrow O_{1, 2, 4} \rightarrow O_{2, 4} \rightarrow O_2$ exists when the following condition is satisfied:
    \begin{equation*}
    \begin{cases}
    \gamma - \alpha > 1\\
    \gamma - 2\alpha > 1\\
    \gamma - \alpha - \beta < 0\\
    \gamma - \beta < 0\\
    \gamma - 2\beta < 0
    \end{cases}
    \Longleftrightarrow
    \begin{cases}
    \gamma - \alpha > 1\\
    \gamma - \alpha - \beta < 0
    \end{cases}.
    \end{equation*}
    
    \item The cycle $\Gamma_{19}: O_2 \rightarrow O_{2, 3} \rightarrow O_3 \rightarrow O_{3, 4} \rightarrow O_{1, 3, 4} \rightarrow O_{1, 2, 3, 4} \rightarrow O_{1, 2, 4} \rightarrow O_{2, 4} \rightarrow O_2$ exists when the following condition is satisfied:
    \begin{equation*}
    \begin{cases}
    \gamma - 2\alpha - \beta > 1\\
    \gamma - \alpha > 1\\
    \gamma - 2\alpha > 1\\
    \gamma - \alpha - \beta < 0\\
    \gamma - \alpha - 2\beta < 0\\
    \gamma - \beta < 0
    \end{cases}
    \Longleftrightarrow
    \begin{cases}
    \gamma - 2\alpha - \beta > 1\\
    \gamma - \alpha - \beta < 0
    \end{cases}.
    \end{equation*}
    
    \item The cycle $\Gamma_{20}: O_2 \rightarrow O_{2, 3} \rightarrow O_{1, 2, 3} \rightarrow O_{1, 2, 3, 4} \rightarrow O_{1, 2, 4} \rightarrow O_{2, 4} \rightarrow O_2$ exists when the following condition is satisfied:
    \begin{equation*}
    \begin{cases}
    \gamma - 2\alpha - \beta > 1\\
    \gamma - \alpha > 1\\
    \gamma - 2\alpha > 1\\
    \gamma - \alpha - \beta < 0\\
    \gamma - \alpha - 2\beta < 0\\
    \gamma - \beta < 0
    \end{cases}
    \Longleftrightarrow
    \begin{cases}
    \gamma - 2\alpha - \beta > 1\\
    \gamma - \alpha - \beta < 0
    \end{cases}.
    \end{equation*}
    
    \item The cycle $\Gamma_{21}: O_3 \rightarrow O_{1, 3} \rightarrow O_{1, 3, 4} \rightarrow O_{1, 2, 3, 4} \rightarrow O_{2, 3, 4} \rightarrow O_{2, 3} \rightarrow O_3$ exists when the following condition is satisfied:
    \begin{equation*}
    \begin{cases}
    \gamma - \alpha - \beta > 1\\
    \gamma - 2\alpha - \beta > 1\\
    \gamma - \alpha > 1\\
    \gamma - \alpha - 2\beta < 0\\
    \gamma - \beta < 0\\
    \gamma - 2\beta < 0
    \end{cases}
    \Longleftrightarrow
    \begin{cases}
    \gamma - \alpha - \beta > 1\\
    \gamma - \alpha - 2\beta < 0
    \end{cases}.
    \end{equation*}
\end{enumerate}

Taking into account the obtained inequalities, all heteroclinic cycles can be divided into the following groups:
\begin{enumerate}
    \item Set of cycles $\{\Gamma_1, \Gamma_5, \Gamma_6, \Gamma_{10}, \Gamma_{11}, \Gamma_{16}\}$ exists if the following condition is met:
    \begin{equation} \label{cond_hc_1}
    \alpha + 1 < \gamma < \beta.
    \end{equation}
    
    \item Set of cycles $\{\Gamma_2, \Gamma_9, \Gamma_{19}, \Gamma_{20}\}$ exists if the following condition is met:
    \begin{equation} \label{cond_hc_2}
    \begin{cases}
    \gamma - 2\alpha - \beta > 1\\
    \gamma - \alpha - \beta < 0
    \end{cases}.
    \end{equation}
    
    \item Set of cycles $\{\Gamma_3, \Gamma_{17}, \Gamma_{18}\}$ exists if the following condition is met:
    \begin{equation} \label{cond_hc_3}
    \begin{cases}
    \gamma - \alpha > 1\\
    \gamma - \alpha - \beta < 0
    \end{cases}.
    \end{equation}
    
    \item Set of cycles $\{\Gamma_4, \Gamma_7, \Gamma_{12}\}$ exists if the following condition is met:
    \begin{equation} \label{cond_hc_4}
    \begin{cases}
    \gamma - \alpha - \beta > 1\\
    \gamma - \beta < 0
    \end{cases}.
    \end{equation}
    
    \item Set of cycles $\{\Gamma_8, \Gamma_{14}, \Gamma_{15}, \Gamma_{21}\}$ exists if the following condition is met:
    \begin{equation} \label{cond_hc_5}
    \begin{cases}
    \gamma - \alpha - \beta > 1\\
    \gamma - \alpha - 2\beta < 0
    \end{cases}.
    \end{equation}
    
    \item Cycle $\Gamma_{13}$ exists if the following condition is met:
    \begin{equation} \label{cond_hc_6}
    \begin{cases}
    \gamma - 2\alpha - \beta > 1\\
    \gamma - \alpha - 2\beta < 0\\
    \gamma - \beta < 0
    \end{cases}.
    \end{equation}
\end{enumerate}

\section{Proof of proposition 1}
\begin{proposition} \label{prop1}
All heteroclinic cycles with equilibria at the vertices of the hypercube $0 \leq \rho_i \leq 1$ include at least six equilibria.
\end{proposition}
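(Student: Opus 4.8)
The plan is to exploit the layered structure of the transition graph in Figure \ref{graph4}. I assign to each vertex of the hypercube its \emph{level}, the number of coordinates equal to $1$; since each edge of the hypercube joins two vertices whose levels differ by exactly one, the graph is bipartite with respect to the parity of the level, and (as already observed) every directed cycle therefore has even length. It then suffices to exclude cycles of length $2$ and of length $4$, after which the minimal admissible length is $6$.

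For the orientation bookkeeping I first fix notation valid for both remaining steps. An edge of the hypercube is obtained by varying one coordinate $\rho_i$ with the others frozen at values in $\{0,1\}$, and the reduced equation is $\dot\rho_i=\rho_i(\gamma-\sigma-\rho_i)(\rho_i-1)$, whose interior zero sits at $\rho_i=\gamma-\sigma$, where $\sigma$ is the sum of the couplings $\rho_i$ receives from the frozen coordinates that equal $1$ (each such coupling being $\alpha$ or $\beta$, as read off from \eqref{LV_system_4}). For the edge to carry a heteroclinic connection exactly one of $\gamma-\sigma<0$ (flow toward $\rho_i=1$) and $\gamma-\sigma>1$ (flow toward $\rho_i=0$) holds, and these are mutually exclusive. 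Hence every existing edge has a single well-defined orientation, so a $2$-cycle, which would traverse one edge in both directions, cannot occur.

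The substantive step is length $4$. Recall that the only $4$-cycles in a hypercube graph bound two-dimensional faces (the four flipped coordinates must cancel, forcing two distinct coordinates each flipped twice). So a $4$-cycle varies two coordinates $\rho_i,\rho_j$ with the other two frozen at $a,b\in\{0,1\}$, through the vertices $A,B,C,D$ at $(\rho_i,\rho_j)=(0,0),(1,0),(1,1),(0,1)$. Writing $S=C_{ik}a+C_{il}b$ and $T=C_{jk}a+C_{jl}b$ for the couplings received by $\rho_i,\rho_j$ from the frozen coordinates (each $C_{\cdot\cdot}\in\{\alpha,\beta\}$), the orientation rule turns the cycle $A\to B\to C\to D\to A$ into
\begin{equation*}
\gamma < S,\qquad \gamma < T+C_{ji},\qquad \gamma > S+C_{ij}+1,\qquad \gamma > T+1 .
\end{equation*}
The first and third give $C_{ij}<-1$, the second and fourth give $C_{ji}>1$; eliminating $\gamma$ from the outer pair and from the inner pair, and using $0<\gamma<1$, yields the necessary condition $1<S-T<\delta-1$ with $\delta:=C_{ji}-C_{ij}=|\beta-\alpha|$. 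The reverse orientation gives the mirror condition $1<T-S<\delta-1$, so in either case $|S-T|$ must lie in the open interval $(1,\delta-1)$.

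The contradiction then comes from the arithmetic of $S-T$. Since $S-T=(C_{ik}-C_{jk})a+(C_{il}-C_{jl})b$ and each bracketed difference is a difference of two numbers from $\{\alpha,\beta\}$, each bracket lies in $\{0,\pm\delta\}$, whence $S-T\in\{0,\pm\delta,\pm2\delta\}$ and $|S-T|\in\{0,\delta,2\delta\}$. None of these lies in $(1,\delta-1)$ (which is in any case empty unless $\delta>2$), so no $4$-cycle exists. I expect the delicate part to be the orientation bookkeeping on the face and the reduction of a $4$-cycle to a face; once the requirement is phrased as $|S-T|\in(1,\delta-1)$, the observation that $S-T$ is pinned to the lattice $\{0,\pm\delta,\pm2\delta\}$ closes the argument cleanly and, incidentally, makes the precise ring pattern of the coupling matrix irrelevant. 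Together with the exclusion of $2$-cycles, this shows every heteroclinic cycle contains at least six vertices.
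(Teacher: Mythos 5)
Your proof is correct, but the way you exclude $4$-cycles is genuinely different from the paper's. Both arguments share the skeleton (the transition graph is bipartite with respect to the level, so all cycles have even length, and it suffices to rule out length $4$; your explicit disposal of $2$-cycles is in fact slightly more careful than the paper, which passes from ``even length'' straight to ``at least four''). The paper then leans on the specific coupling pattern: it observes that edges carrying equal labels inside one cycle must share a direction, and that an edge labeled $E1$ (resp.\ $E2$) forces a co-directed edge labeled $E4$ (resp.\ $E5$); this eliminates every candidate face except $O_{1,4} \to O_{1,2,4} \to O_{1,2,3,4} \to O_{1,3,4} \to O_{1,4}$ and its reversal, for which it derives the incompatible pair $\beta - \alpha > \tfrac{1}{2}$ and $\alpha - \beta > 1$ by subtracting inequalities. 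You instead treat all two-dimensional faces and both orientations uniformly: your four orientation inequalities are exactly the paper's edge conditions (for the face just named they give $S = 2\beta$, $T = 2\alpha$ and reproduce the paper's system verbatim), and the single dichotomy ``$|S-T|$ must lie in $(1,\delta-1)$ but is confined to $\{0,\delta,2\delta\}$'' closes every case at once. What your route buys: it never consults the figure or the label bookkeeping, and, as you note, it uses only that all off-diagonal couplings belong to the two-element set $\{\alpha,\beta\}$ rather than their specific cyclic arrangement, so your argument would survive any permutation of the coupling pattern. What the paper's route buys: the label machinery it develops is reused in the proof of Proposition \ref{prop2}, and its enumeration shows exactly which faces come closest to supporting a cycle. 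One notational slip worth fixing: $\delta := C_{ji}-C_{ij}$ is positive only in the forward orientation (in the mirror case the roles of $C_{ij}$ and $C_{ji}$ swap), so it is cleaner to define $\delta := |\beta-\alpha|$ from the outset --- which your own identification already justifies, since $C_{ij} < -1 < 1 < C_{ji}$ forces $\{C_{ij},C_{ji}\} = \{\alpha,\beta\}$.
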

\begin{proof}
As noted above, all possible cycles contain an even number of equilibria. Therefore all heteroclinic cycles include at least four equilibria. Suppose there is a cycle $\Gamma$ containing four equilibria.

Since all edges with the same labels that are part of any cycle must go in the same direction, the only cycles that contain four equilibria are: $$\Gamma_1: O_1 \rightarrow O_{1, 3} \rightarrow O_{1, 3, 4} \rightarrow O_{1, 4} \rightarrow O_1,$$ $$\Gamma_2: O_4 \rightarrow O_{1, 4} \rightarrow O_{1, 2, 4} \rightarrow O_{2, 4} \rightarrow O_4,$$ $$\Gamma_3: O_{1, 4} \rightarrow O_{1, 2, 4} \rightarrow O_{1, 2, 3, 4} \rightarrow O_{1, 3, 4} \rightarrow O_{1, 4},$$ and the cycles $\Gamma'_1$, $\Gamma'_2$, and $\Gamma'_3$ obtained from the cycles $\Gamma_1$, $\Gamma_2$ and $\Gamma_3$ respectively by reversing the direction of the traversal.

As mentioned above, the edges labeled $E1$ and $E4$, as well as $E2$ and $E5$, have the same direction. Therefore, the cycles $\Gamma_1$, $\Gamma_2$, $\Gamma'_1$ and $\Gamma'_2$ do not exist.


For the cycle $\Gamma_3$ to exist, the following system of inequalities must be satisfied:
\begin{equation*}
\begin{cases}
\gamma - 2\alpha > 1\\
\gamma - \alpha - 2\beta > 1\\
\gamma - 2\beta < 0\\
\gamma - 2\alpha - \beta < 0
\end{cases}.
\end{equation*}
Subtracting the third inequality from the first, we obtain $\beta - \alpha > \frac{1}{2}$, subtracting the fourth inequality from the second, we obtain $\alpha - \beta > 1$. These inequalities are incompatible.

For the cycle $\Gamma'_3$ to exist, the following system of inequalities must be satisfied:
\begin{equation*}
\begin{cases}
\gamma - 2\beta > 1\\
\gamma - 2\alpha - \beta > 1\\
\gamma - 2\alpha < 0\\
\gamma - \alpha - 2\beta < 0
\end{cases}.
\end{equation*}
Subtracting the third inequality from the first, we obtain $\alpha - \beta > \frac{1}{2}$, subtracting the fourth inequality from the second, we obtain $\beta - \alpha > 1$. These inequalities are inconsistent. Therefore, the cycles $\Gamma_3$ and $\Gamma'_3$ also do not exist. Thus, all existing cycles contain at least six equilibria.
\end{proof}

\section{Proof of proposition 2}
\begin{proposition} \label{prop2}
All heteroclinic cycles with equilibria at the vertices of the hypercube $0 \leq \rho_i \leq 1$ contain one of the equilibria $O_1$, $O_2$, $O_3$ or $O_4$.
\end{proposition}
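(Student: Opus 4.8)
The plan is to translate the statement into a purely combinatorial claim about the transition graph of Figure~\ref{graph4} and then rule out the offending cycles level by level. Writing the \emph{weight} of a vertex for the number of its coordinates equal to $1$, a heteroclinic cycle whose equilibria are vertices of the hypercube and that avoids $O_1,O_2,O_3,O_4$ is precisely a directed cycle of that graph confined to weights $2$, $3$ and $4$. The weight-$0$ vertex $O$ may be ignored: on each edge joining it to a weight-$1$ vertex only one coordinate $\rho_i$ varies (the others staying $0$), so $\dot\rho_i=\rho_i(\gamma-\rho_i)(\rho_i-1)$ and the equilibrium $\rho_i=\gamma$ lies in the interior of the edge because $0<\gamma<1$; hence $O$ is isolated. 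It therefore suffices to show that the subgraph induced on weights $\{2,3,4\}$ has no directed cycle. Throughout I use the orientation rule already established: flipping coordinate $i$ gives an edge toward the heavier vertex when the associated expression (one of $\gamma-\alpha-\beta$, $\gamma-2\beta$, $\gamma-2\alpha$, $\gamma-\alpha-2\beta$, $\gamma-2\alpha-\beta$ attached to the labels $E3,\ldots,E7$) is negative, toward the lighter vertex when it exceeds $1$, and no edge at all when it lies in $[0,1]$. In particular edges with a common label are co-oriented.

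First I would prune vertices that cannot carry both an incoming and an outgoing cycle edge. The weight-$2$ vertices $O_{1,2}$, $O_{2,3}$, $O_{3,4}$ join weight $3$ through two $E3$-edges, which are co-oriented, so they are discarded. The weight-$3$ vertices $O_{1,2,3}$ and $O_{2,3,4}$ reach weight $2$ only through $E3$-edges, hence can lie on a cycle only by using their single edge to the apex $O_{1,2,3,4}$ ($E6$, resp.\ $E7$). I then split on whether the cycle meets $O_{1,2,3,4}$. If it does not, then $O_{1,2,3}$ and $O_{2,3,4}$ drop out too, and on the surviving vertices $O_{1,2,4},O_{1,3,4}$ (weight $3$) and $O_{1,3},O_{1,4},O_{2,4}$ (weight $2$) each of $O_{1,3}$ and $O_{2,4}$ retains a single edge; the induced graph is the path $O_{1,3}-O_{1,3,4}-O_{1,4}-O_{1,2,4}-O_{2,4}$, which has no cycle.

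The essential case is a cycle through the apex $O_{1,2,3,4}$. Its incident edges are the two $E6$-edges (to $O_{1,2,3}$, $O_{1,3,4}$) and the two $E7$-edges (to $O_{1,2,4}$, $O_{2,3,4}$); by co-orientation, to obtain both an entering and a leaving edge one label must point up and the other down. Under the standing assumption $\alpha\le\beta$ the only admissible choice is $\gamma-\alpha-2\beta<0$ and $\gamma-2\alpha-\beta>1$ (the reverse would force $\alpha-\beta>1$). Combined with $0<\gamma<1$ these give $\alpha<0<\beta$ and, decisively, $\gamma-2\beta=(\gamma-\alpha-2\beta)+\alpha<0$ and $\gamma-2\alpha=(\gamma-2\alpha-\beta)+\beta>1$, so $E4$ points up and $E5$ points down. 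With these orientations fixed and $E3$ left free, the cycle would have to return from an exit vertex $\{O_{1,2,4},O_{2,3,4}\}$ of the apex to an entry vertex $\{O_{1,2,3},O_{1,3,4}\}$ through weights $\{2,3\}$; I would finish by checking reachability and showing that every directed path leaving an exit vertex runs into the sink $O_{2,4}$ or $O_{2,3,4}$ and never reaches an entry vertex, so no return path exists.

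I expect the apex case to be the main obstacle. The delicate points are the parameter deduction that fixes the orientations of $E4$ and $E5$ from those of $E6$ and $E7$ — where the hypothesis $\alpha\le\beta$ is indispensable to discard the mirror configuration — and the small amount of case analysis in the free label $E3$ needed for the reachability argument; the weight-$\{2,3\}$ case, by comparison, collapses to a one-line observation once the $E3$-only vertices are removed.
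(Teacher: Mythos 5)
Your proposal is correct, and in the decisive case it takes a genuinely different route from the paper. The two arguments share the preliminary pruning: both discard $O_{1,2}$, $O_{2,3}$, $O_{3,4}$ because all their admissible edges carry the common label $E3$, and both dispose of cycles avoiding $O_{1,2,3,4}$ by reducing to the acyclic path $O_{1,3}$--$O_{1,3,4}$--$O_{1,4}$--$O_{1,2,4}$--$O_{2,4}$. For cycles through $O_{1,2,3,4}$, however, the paper enumerates the candidate cycles compatible with co-orientation (three, up to reversal), writes the full existence system of inequalities for each, and kills each system by subtracting inequalities, arriving at the incompatible pair $\beta-\alpha>\tfrac{1}{2}$ and $\alpha-\beta>1$; you instead use the standing reduction $\alpha\le\beta$ --- legitimate because the transformation \eqref{sym} permutes $\{O_1,\dots,O_4\}$, so the statement of Proposition \ref{prop2} is invariant under it, though you should say this explicitly --- to fix the orientations of $E6$ and $E7$ once and for all, propagate them to $E4$ and $E5$ via $0<\gamma<1$, and finish with a purely combinatorial reachability argument in which only $E3$ is free. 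Your route avoids the cycle enumeration (whose completeness is itself a claim needing care in the paper) at the price of invoking the symmetry; the paper needs no symmetry but repeats essentially the same arithmetic once per candidate cycle. The one step you leave as a plan does close, and is worth writing out: with $E4$ up, $E5$ down, $E6$ up, $E7$ down, the out-neighbourhood of $O_{1,2,4}$ is $\{O_{2,4}\}$, that of $O_{2,4}$ is contained in $\{O_{2,3,4}\}$ (only when $E3$ points up), and that of $O_{2,3,4}$ is contained in $\{O_{2,4}\}$ (only when $E3$ points down); since $E3$ cannot point both ways, the set $\{O_{1,2,4},O_{2,4},O_{2,3,4}\}$ is closed under outgoing edges and contains no entry vertex, so no return path to $O_{1,2,3}$ or $O_{1,3,4}$ exists. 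One incidental remark: your orientation rule (expression negative $\Rightarrow$ edge toward the heavier vertex, greater than $1$ $\Rightarrow$ toward the lighter one) is the one actually consistent with the dynamics and with the existence conditions listed in the appendix, even though the sentence describing Figure \ref{graph4} in the main text states the opposite convention; this affects nothing in your argument, but it is worth being aware of the discrepancy.
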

\begin{proof}
Suppose that there is a cycle $\Gamma$ that does not contain any of these equilibria. Then it contains at least one of the equilibria $O_{1, 2, 3}$, $O_{1, 2, 4}$, $O_{1, 3, 4}$, or $O_{2, 3, 4}$ (since there are no horizontal edges). Moreover, $\Gamma$ cannot contain equilibria $O_{1, 2}$, $O_{2, 3}$, and $O_{3, 4}$, since all edges directed downward from these equilibria have the same label.

Assume that $\Gamma$ does not contain the equilibrium $O_{1, 2, 3, 4}$. Then $\Gamma$ does not contain the equilibria $O_{1, 2, 3}$ and $O_{2, 3, 4}$, since all edges directed upward from these equilibria have the same labels. Therefore, $\Gamma$ also does not contain the equilibria $O_{1, 3}$ and $O_{2, 4}$, since there is only one edge from each of them to the remaining equilibria. It is easy to see that it is impossible to form a cycle from the remaining equilibria ($O_{1, 4}$, $O_{1, 2, 4}$ and $O_{1, 3, 4}$). This means that $\Gamma$ contains the equilibrium $O_{1, 2, 3, 4}$. The figure \ref{graph4_part} shows a graph with possible edges.

Since edges with the same labels have the same directions, the cycle $\Gamma$ can contain at most one equilibrium of the pair $O_{1, 2, 3}$ and $O_{2, 3, 4}$. Taking into account all the limitations, the existence of such cycles is possible: $$\Gamma_1: O_{1, 2, 3, 4} \rightarrow O_{1, 2, 3} \rightarrow O_{1, 3} \rightarrow O_{1, 3, 4} \rightarrow O_{1, 4} \rightarrow O_{1, 2, 4} \rightarrow O_{1, 2, 3, 4},$$ $$\Gamma_2: O_{1, 2, 3, 4} \rightarrow O_{1, 3, 4} \rightarrow O_{1, 4} \rightarrow O_{1, 2, 4} \rightarrow O_{1, 2, 3, 4},$$ $$\Gamma_3: O_{1, 2, 3, 4} \rightarrow O_{1, 3, 4} \rightarrow O_{1, 4} \rightarrow O_{1, 2, 4} \rightarrow O_{2, 4} \rightarrow O_{2, 3, 4} \rightarrow O_{1, 2, 3, 4},$$ as well as cycles $\Gamma'_1$, $\Gamma'_2$ and $\Gamma'_3$, obtained from $\Gamma_1$, $\Gamma_2$ and $\Gamma_3$, respectively, by replacing the direction of the traversal with the opposite one.
\begin{figure}[H]
    \centering
    \includegraphics[width = 1\linewidth] {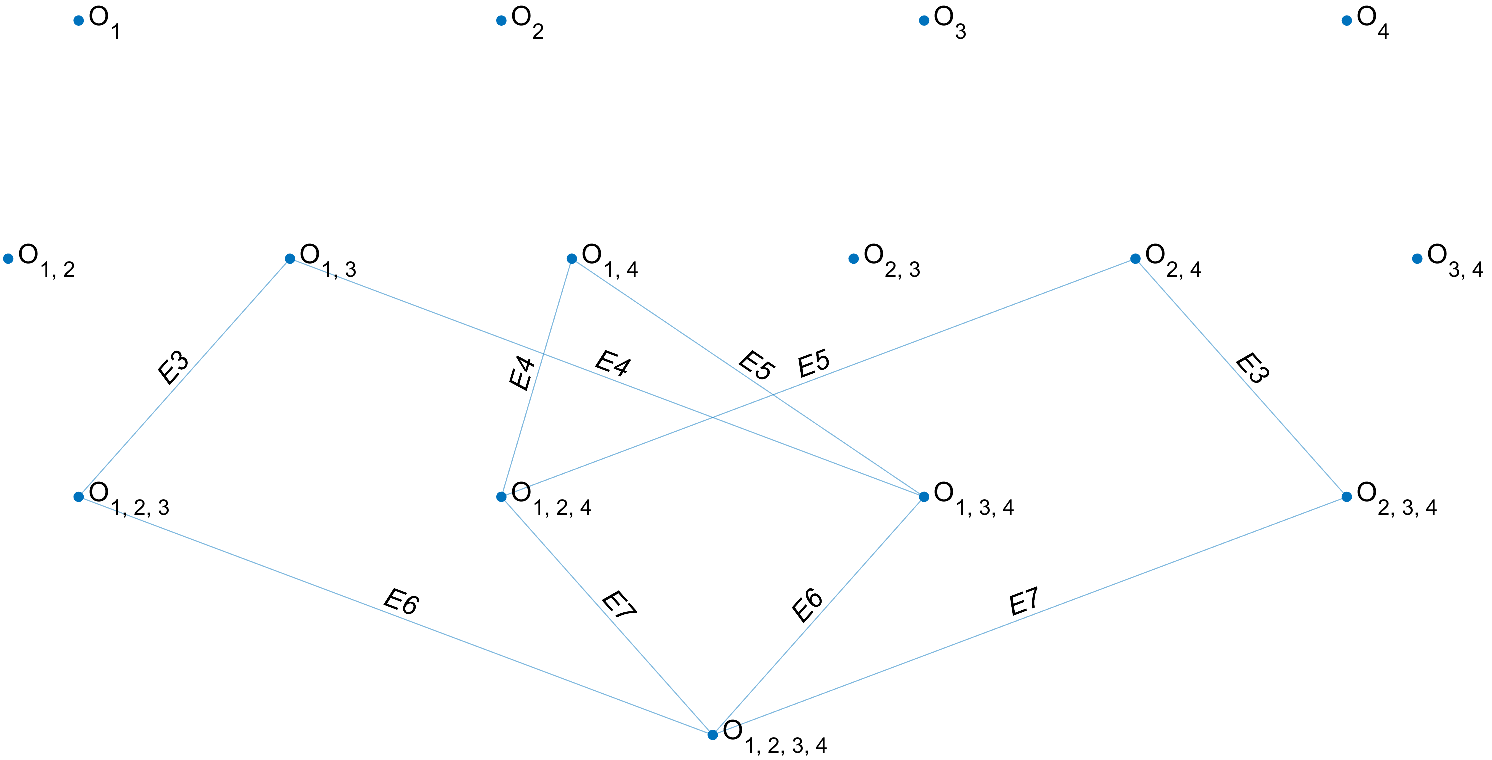}
    \caption{A graph containing all heteroclinic cycles that do not contain the equilibria $O_1$, $O_2$, $O_3$ and $O_4$.}
    \label{graph4_part}
\end{figure}

For the cycle $\Gamma_1$ to exist, the following conditions must be met:
\begin{equation*}
\begin{cases}
\gamma - \alpha - \beta > 1\\
\gamma - 2\alpha > 1\\
\gamma - \alpha - 2\beta > 1\\
\gamma - 2\beta < 0\\
\gamma - 2\alpha - \beta < 0
\end{cases}.
\end{equation*}
Subtracting the fourth inequality from the second one, we obtain the inequality $\beta - \alpha > \frac{1}{2}$. Subtracting the fifth inequality from the third one, we obtain the inequality $\alpha - \beta > 1$. These inequalities are incompatible.

For the cycle $\Gamma_2$ to exist, the following conditions must be met:
\begin{equation*}
\begin{cases}
\gamma - 2\alpha > 1\\
\gamma - \alpha - 2\beta > 1\\
\gamma - 2\beta < 0\\
\gamma - 2\alpha - \beta < 0
\end{cases}.
\end{equation*}
Subtracting the third inequality from the first one, and the fourth inequality from the second one, we obtain the same inequalities as in the previous case.

For the cycle $\Gamma_3$ to exist, the following conditions must be met:
\begin{equation*}
\begin{cases}
\gamma - 2\alpha > 1\\
\gamma - \alpha - 2\beta > 1\\
\gamma - \alpha - \beta < 0\\
\gamma - 2\beta < 0\\
\gamma - 2\alpha - \beta < 0
\end{cases}.
\end{equation*}
Subtracting the fourth inequality from the first one, and the fifth inequality from the second one, we obtain the same inequalities as in the first case.

Similarly, it can be shown that there are no cycles $\Gamma'_1$, $\Gamma'_2$, and $\Gamma'_3$. Thus, it is proven that all heteroclinic cycles contain at least one of the equilibria $O_1$, $O_2$, $O_3$ or $O_4$.
\end{proof}

\section{Proof of corollary 1}
\begin{corollary} \label{prop3}
A necessary condition for the existence of a heteroclinic cycle with equilibria at the vertices of the hypercube $0 \leq \rho_i \leq 1$ is the fulfillment of the inequality $\alpha + 1 < \gamma < \beta$ or $\beta + 1 < \gamma < \alpha$.
\end{corollary}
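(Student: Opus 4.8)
The plan is to combine Proposition \ref{prop2} with a purely local analysis of the edges meeting each single-active vertex $O_i$. By Proposition \ref{prop2}, every heteroclinic cycle with vertices at the hypercube corners passes through at least one of $O_1, O_2, O_3, O_4$. Hence it suffices to show that the presence of a single $O_i$ on a cycle already forces $\alpha + 1 < \gamma < \beta$ or $\beta + 1 < \gamma < \alpha$; since this condition will turn out identical for all four vertices, the corollary follows at once.

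First I would list the edges incident to a fixed vertex, say $O_1 = (1,0,0,0)$, whose hypercube neighbours are $O$, $O_{1,2}$, $O_{1,3}$ and $O_{1,4}$. The edge toward the origin always carries the interior equilibrium $(\gamma,0,0,0)$, and since $0 < \gamma < 1$ this point lies strictly inside the segment; therefore no heteroclinic connection between $O_1$ and $O$ ever exists, and $O_1$ has at most three usable edges. Of these, the edges to $O_{1,2}$ and $O_{1,4}$ pass through $P_{1,2}$ and $P_{1,4}$ and are both governed by condition \eqref{cond_1} (label $E1$, controlled by $\gamma - \beta$), while the edge to $O_{1,3}$ passes through $P_{1,3}$ and is governed by condition \eqref{cond_2} (label $E2$, controlled by $\gamma - \alpha$). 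The crucial structural fact, already exploited for the transition graph, is that edges sharing a label share their orientation.

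Next I would argue that lying on a directed cycle forces $O_1$ to possess at least one incoming and one outgoing edge. Since all edges at $O_1$ run to corners of coordinate sum $2$, each present edge is either outgoing (``up'', when the first inequality of its label holds) or incoming (``down'', when the second holds), and the two $E1$ edges are necessarily parallel. A short case check over the orientations of the two labels shows that the only way to realise both an incoming and an outgoing edge is for $E1$ and $E2$ to point oppositely: either $\gamma - \beta > 1$ with $\gamma - \alpha < 0$, giving $\beta + 1 < \gamma < \alpha$, or $\gamma - \beta < 0$ with $\gamma - \alpha > 1$, giving $\alpha + 1 < \gamma < \beta$. If instead both labels point the same way, or one label produces no edge at all (its parameter combination falling in $[0,1]$ so that the interior equilibrium blocks the segment), then all surviving edges at $O_1$ are oriented identically and $O_1$ cannot lie on a cycle.

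Finally I would repeat the count at $O_2, O_3, O_4$. The roles of $E1$ and $E2$ are interchanged (two $E2$ edges and one $E1$ edge at $O_2$ and $O_4$, the mirror situation occurring at $O_1$ and $O_3$), but the opposite-orientation requirement yields exactly the same pair of inequalities. The main obstacle is the orientation bookkeeping in the third step: one must treat the case in which a label produces no edge (parameter in $[0,1]$) on an equal footing with a ``wrong'' orientation, and must confirm that the permanently blocked origin edge genuinely reduces the local degree to three, since otherwise a spurious fourth edge could supply the missing direction and break the argument.
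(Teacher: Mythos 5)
Your proposal is correct and follows essentially the same route as the paper: invoke Proposition \ref{prop2} to place some $O_i$ on the cycle, observe that the edges incident to $O_i$ carry only labels $E1$ and $E2$, and conclude that a cycle through $O_i$ needs one incoming and one outgoing edge, forcing these two labels to have opposite orientations, which is exactly $\alpha + 1 < \gamma < \beta$ or $\beta + 1 < \gamma < \alpha$. The only difference is that you make explicit what the paper leaves implicit (the blocked edge toward the origin, the label count at each $O_i$, and the case where a label yields no edge), which strengthens rather than changes the argument.
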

\begin{proof}
Since all heteroclinic cycles contain at least one of the equilibria $O_1$, $O_2$, $O_3$, or $O_4$ (see proposition \ref{prop2}), every heteroclinic cycle must contain edges labeled $E1$ and $E2$. Therefore, a necessary condition for the existence of cycles is the fulfillment of conditions \eqref{cond_1} and \eqref{cond_2}. Since the edges labeled $E1$ and $E2$ must be in different directions, then for heteroclinic cycles to exist, one of the following two inequalities must hold:
\begin{equation*}
\left[
\begin{gathered}
\alpha + 1 < \gamma < \beta\\
\beta + 1 < \gamma < \alpha
\end{gathered}
\right..
\end{equation*}
\end{proof}

\bibliographystyle{unsrt}
\bibliography{my_bib.bib}

@article{hodgkin1952currents,
  title={Currents carried by sodium and potassium ions through the membrane of the giant axon of Loligo},
  author={Hodgkin, Allan L and Huxley, Andrew F},
  journal={The Journal of physiology},
  volume={116},
  number={4},
  pages={449},
  year={1952}
}

@article{hodgkin1952components,
  title={The components of membrane conductance in the giant axon of Loligo},
  author={Hodgkin, Allan L and Huxley, Andrew F},
  journal={The Journal of physiology},
  volume={116},
  number={4},
  pages={473},
  year={1952}
}

@article{hodgkin1952quantitative,
  title={A quantitative description of membrane current and its application to conduction and excitation in nerve},
  author={Hodgkin, Alan L and Huxley, Andrew F},
  journal={The Journal of physiology},
  volume={117},
  number={4},
  pages={500},
  year={1952}
}

@article{hodgkin1952measurement,
  title={Measurement of current-voltage relations in the membrane of the giant axon of Loligo},
  author={Hodgkin, Alan L and Huxley, Andrew F and Katz, Bernard},
  journal={The Journal of physiology},
  volume={116},
  number={4},
  pages={424},
  year={1952}
}

@article{wilson1972excitatory,
  title={Excitatory and inhibitory interactions in localized populations of model neurons},
  author={Wilson, Hugh R and Cowan, Jack D},
  journal={Biophysical journal},
  volume={12},
  number={1},
  pages={1--24},
  year={1972},
  publisher={Elsevier}
}

@article{wilson1973mathematical,
  title={A mathematical theory of the functional dynamics of cortical and thalamic nervous tissue},
  author={Wilson, Hugh R and Cowan, Jack D},
  journal={Kybernetik},
  volume={13},
  number={2},
  pages={55--80},
  year={1973},
  publisher={Springer}
}

@article{destexhe2009wilson,
  title={The Wilson--Cowan model, 36 years later},
  author={Destexhe, Alain and Sejnowski, Terrence J},
  journal={Biological cybernetics},
  volume={101},
  number={1},
  pages={1--2},
  year={2009},
  publisher={Springer}
}

@article{fukai1997simple,
  title={A simple neural network exhibiting selective activation of neuronal ensembles: from winner-take-all to winners-share-all},
  author={Fukai, Tomoki and Tanaka, Shigeru},
  journal={Neural computation},
  volume={9},
  number={1},
  pages={77--97},
  year={1997},
  publisher={MIT Press One Rogers Street, Cambridge, MA 02142-1209, USA journals-info~…}
}

@article{may1975nonlinear,
  title={Nonlinear aspects of competition between three species},
  author={May, Robert M and Leonard, Warren J},
  journal={SIAM journal on applied mathematics},
  volume={29},
  number={2},
  pages={243--253},
  year={1975},
  publisher={SIAM}
}

@article{busse1980convection,
  title={Convection in a rotating layer: a simple case of turbulence},
  author={Busse, Friedrich H and Heikes, KE},
  journal={Science},
  volume={208},
  number={4440},
  pages={173--175},
  year={1980},
  publisher={American Association for the Advancement of Science}
}

@book{holmes2012turbulence,
  title={Turbulence, coherent structures, dynamical systems and symmetry},
  author={Holmes, Philip},
  year={2012},
  publisher={Cambridge university press}
}

@book{orlando2021nonlinearities,
  title={Nonlinearities in economics},
  author={Orlando, Giuseppe and Pisarchik, Alexander N and Stoop, Ruedi},
  year={2021},
  publisher={Springer}
}

@article{zhang2023emergent,
  title={Emergent non-Hermitian physics in a generalized Lotka-Volterra model},
  author={Zhang, Tengzhou and Cai, Zi},
  journal={Physical Review B},
  volume={108},
  number={10},
  pages={104304},
  year={2023},
  publisher={APS}
}

@article{roy2019numerical,
  title={Numerical implementation of dynamical mean field theory for disordered systems: Application to the Lotka--Volterra model of ecosystems},
  author={Roy, Felix and Biroli, Giulio and Bunin, Guy and Cammarota, Chiara},
  journal={Journal of Physics A: Mathematical and Theoretical},
  volume={52},
  number={48},
  pages={484001},
  year={2019},
  publisher={IOP Publishing}
}

@article{rabinovich2001dynamical,
  title={Dynamical encoding by networks of competing neuron groups: winnerless competition},
  author={Rabinovich, M and Volkovskii, A and Lecanda, P and Huerta, Ram{\'o}n and Abarbanel, Henry DI and Laurent, Gilles},
  journal={Physical review letters},
  volume={87},
  number={6},
  pages={068102},
  year={2001},
  publisher={APS}
}

@article{afraimovich2004heteroclinic,
  title={Heteroclinic contours in neural ensembles and the winnerless competition principle},
  author={Afraimovich, Valentin S and Rabinovich, Mikhail I and Varona, Pablo},
  journal={International Journal of Bifurcation and Chaos},
  volume={14},
  number={04},
  pages={1195--1208},
  year={2004},
  publisher={World Scientific}
}

@article{rabinovich2006dynamical,
  title={Dynamical principles in neuroscience},
  author={Rabinovich, Mikhail I and Varona, Pablo and Selverston, Allen I and Abarbanel, Henry DI},
  journal={Reviews of modern physics},
  volume={78},
  number={4},
  pages={1213--1265},
  year={2006},
  publisher={APS}
}

@article{korotkov2025heteroclinic,
  title={Heteroclinic cycles and switching activity in model of neuronal ensemble},
  author={Korotkov, Alexander G and Syundyukova, Ekaterina V and Gubina, Elena V and Osipov, Grigory V},
  journal={The European Physical Journal Special Topics},
  pages={1--14},
  year={2025},
  publisher={Springer}
}

@article{kirk1994competition,
  title={A competition between heteroclinic cycles},
  author={Kirk, Vivien and Silber, Mary},
  journal={Nonlinearity},
  volume={7},
  number={6},
  pages={1605},
  year={1994},
  publisher={IOP Publishing}
}

@article{ashwin1999heteroclinic,
  title={Heteroclinic networks in coupled cell systems},
  author={Ashwin, Peter and Field, Michael},
  journal={Archive for Rational Mechanics and Analysis},
  volume={148},
  number={2},
  pages={107--143},
  year={1999},
  publisher={Springer}
}

@article{schittler2012computation,
  title={Computation by switching in complex networks of states},
  author={Schittler Neves, Fabio and Timme, Marc},
  journal={Physical review letters},
  volume={109},
  number={1},
  pages={018701},
  year={2012},
  publisher={APS}
}

@article{koch2024biological,
  title={Biological computations: Limitations of attractor-based formalisms and the need for transients},
  author={Koch, Daniel and Nandan, Akhilesh and Ramesan, Gayathri and Koseska, Aneta},
  journal={Biochemical and Biophysical Research Communications},
  volume={720},
  pages={150069},
  year={2024},
  publisher={Elsevier}
}

@article{mazor2005transient,
  title={Transient dynamics versus fixed points in odor representations by locust antennal lobe projection neurons},
  author={Mazor, Ofer and Laurent, Gilles},
  journal={Neuron},
  volume={48},
  number={4},
  pages={661--673},
  year={2005},
  publisher={Elsevier}
}

@article{meyer2023heteroclinic,
  title={Heteroclinic networks for brain dynamics},
  author={Meyer-Ortmanns, Hildegard},
  journal={Frontiers in Network Physiology},
  volume={3},
  pages={1276401},
  year={2023},
  publisher={Frontiers Media SA}
}

@article{ashwin2024network,
  title={Network attractors and nonlinear dynamics of neural computation},
  author={Ashwin, Peter and Fadera, Muhammed and Postlethwaite, Claire},
  journal={Current Opinion in Neurobiology},
  volume={84},
  pages={102818},
  year={2024},
  publisher={Elsevier}
}

@article{thakur2022heteroclinic,
  title={Heteroclinic units acting as pacemakers: entrained dynamics for cognitive processes},
  author={Thakur, Bhumika and Meyer-Ortmanns, Hildegard},
  journal={Journal of Physics: Complexity},
  volume={3},
  number={3},
  pages={035003},
  year={2022},
  publisher={IOP Publishing}
}

@article{rossi2025dynamical,
  title={Dynamical properties and mechanisms of metastability: A perspective in neuroscience},
  author={Rossi, Kalel L and Budzinski, Roberto C and Medeiros, Everton S and Boaretto, Bruno RR and Muller, Lyle and Feudel, Ulrike},
  journal={Physical Review E},
  volume={111},
  number={2},
  pages={021001},
  year={2025},
  publisher={APS}
}

\end{document}